\newtheorem{lemma}{Lemma}[section]
\newtheorem{theorem}{Theorem}[section]
\newtheorem{remark}{Remark}[section]
\newtheorem{proposition}{Proposition}[section]
\numberwithin{equation}{section}
\begin{document}
\begin{center}
\textbf{\Large{Inequalities Among Logarithmic-Mean Measures}}
\end{center}

\bigskip
\begin{center}
\textbf{\large{Inder Jeet Taneja}}\\
Departamento de Matem\'{a}tica\\
Universidade Federal de Santa Catarina\\
88.040-900 Florian\'{o}polis, SC, Brazil.\\
\textit{e-mail: ijtaneja@gmail.com\\
http://www.mtm.ufsc.br/$\sim$taneja}
\end{center}

\begin{abstract}
\textit{In this paper we shall consider some famous means such as
arithmetic, harmonic, geometric, logarithmic means, etc. Inequalities involving logarithmic mean with differences among other means are presented.}
\end{abstract}

\bigskip
\textbf{Key words:} {Arithmetic mean, geometric mean, harmonic mean, logarithmic mean and inequalities.}

\bigskip
\textbf{AMS Classification:} 94A17; 62B10.

\section{Introduction}

Let us consider the following well known \textit{mean of order t} \cite{beb}:
\begin{equation}
\label{eq1}
B_t (a,b) = \left\{ {{\begin{array}{*{20}c}
 {\left( {\frac{a^t + b^t}{2}} \right)^{1 / t},} & {t \ne 0} \\
 {\sqrt {ab} ,} & {t = 0} \\
 {\max \{a,b\},} & {t = \infty } \\
 {\min \{a,b\},} & {t = - \infty } \\
\end{array} }} \right.
\end{equation}

\noindent
for all $a,b,t \in {\rm R},\mbox{ }a,b > 0$.

\bigskip
In particular, we have
\begin{align}
B_{ - 1} (a,b) & = H(a,b) = \frac{2ab}{a + b},\notag\\
B_0 (a,b) & = G(a,b) = \sqrt {ab} ,\notag\\
B_{1 / 2} (a,b) & = N_1 (a,b) = \left( {\frac{\sqrt a + \sqrt b }{2}} \right)^2,\notag\\
B_1 (a,b) & = A(a,b) = \frac{a + b}{2},\notag\\
 \intertext{and}  B_2 (a,b) &  = S(a,b) = \sqrt {\frac{a^2 + b^2}{2}} .\notag
\end{align}

\bigskip
The means, $H(a,b)$, $G(a,b)$, $A(a,b)$ and $S(a,b)$ are known in the
literature as \textit{harmonic}, \textit{geometric}, \textit{arithmetic} and \textit{root-square means} respectively. For simplicity we can call the measure, $N_1 (a,b)$ as \textit{square-root} mean. It is well know that \cite{beb} the \textit{mean of order }$t$ given in  (\ref{eq1}) is monotonically increasing in $t$, then we can write

\begin{equation}
\label{eq2}
H(a,b) \le G(a,b) \le N_1 (a,b) \le A(a,b) \le S(a,b).
\end{equation}

\bigskip
Dragomir and Pearce \cite{drp} proved the following inequality:

\[
\frac{a^r + b^r}{2} \le \frac{b^{r + 1} - a^{r + 1}}{(r + 1)(b - a)} \le
\left( {\frac{a + b}{2}} \right)^r,
\]

\noindent
for all $a,b > 0$, $a \ne b$, $r \in (0,1)$. In particular take $r =
\frac{1}{2}$ in (\ref{eq1}), we get
\begin{equation}
\label{eq3}
\frac{\sqrt a + \sqrt b }{2} \le \frac{2(b^{3 / 2} - a^{3 / 2})}{3(b - a)}
\le \sqrt {\frac{a + b}{2}} ,
\quad
a \ne b.
\end{equation}

After necessary calculations in (\ref{eq3}), we get
\begin{equation}
\label{eq4}
\left( {\frac{\sqrt a + \sqrt b }{2}} \right)^2 \le \frac{a + \sqrt {ab} +
b}{3} \le \left( {\frac{\sqrt a + \sqrt b }{2}} \right)\left( {\sqrt
{\frac{a + b}{2}} } \right).
\end{equation}

On the other side we can easily check that
\begin{equation}
\label{eq5}
\left( {\frac{\sqrt a + \sqrt b }{2}} \right)\left( {\sqrt {\frac{a + b}{2}}
} \right) \le \frac{a + b}{2}.
\end{equation}

The expressions  (\ref{eq3}), (\ref{eq4}) and (\ref{eq5}) lead us to the following
inequality:
\begin{equation}
\label{eq6}
H(a,b) \le G(a,b) \le N_1 (a,b)
 \le N_3 (a,b) \le N_2 (a,b) \le A(a,b) \le S(a,b),
\end{equation}

\noindent
where
\begin{equation}
\label{eq7}
N_2 (a,b) = \left( {\frac{\sqrt a + \sqrt b }{2}} \right)\left( {\sqrt
{\frac{a + b}{2}} } \right),
\end{equation}

\noindent
and
\begin{equation}
\label{eq8}
N_3 (a,b) = \frac{a + \sqrt {ab} + b}{3}.
\end{equation}

Thus we have three new means, where $N_1 (a,b)$ appears as a natural way.
The $N_2 (a,b)$ can be seen in Taneja \cite{tan2}, \cite{tan3} and the mean $N_3 (a,b)$ can
be seen in Zhang and Wu  \cite{zhw}. Some more properties of the means given
in (\ref{eq6}) are studied by  \cite{szd},  \cite{sjz}. A survey on these inequalities can be seen in
\cite{tan5}.

\bigskip
Based on the inequalities (\ref{eq6}), the author \cite{tan3}-\cite{tan4}  proved the following
results
\begin{align}
\label{eq9}
H(a,b)& \le \frac{2A(a,b)H(a,b)}{A(a,b) + H(a,b)} \le G(a,b) \le
\frac{2H(a,b) + S(a,b)}{3}\notag\\
& \le \frac{A(a,b) + H(a,b)}{2} \le \sqrt {\frac{\left( {A(a,b)} \right)^2 +
\left( {H(a,b)} \right)^2}{2}} \le \frac{S(a,b) + G(a,b)}{2}\notag\\
& \le \frac{H(a,b) + 2S(a,b)}{3} \le A(a,b) \le S(a,b) + H(a,b) - G(a,b)\notag\\
& \le S(a,b) \le 3\left[ {A(a,b) - G(a,b)} \right] + H(a,b)
\end{align}
\begin{align}
\label{eq10}
H(a,b) &\le G(a,b) \le \frac{G(a,b) + 2N_2 (a,b)}{3} \le N_1 (a,b) \le
\frac{2A(a,b) + 7N_1 (a,b)}{9}\notag\\
& \le N_2 (a,b) \le \frac{A(a,b) + N_1 (a,b)}{2} \le \frac{7A(a,b) +
H(a,b)}{8} \le A(a,b)
\end{align}
\begin{align}
\label{eq11}
G(a,b) &\le \frac{S(a,b) + 3G(a,b)}{4} \le N_1 (a,b)  \le {\frac{S(a,b) + 8N_1 (a,b)}{9}}  \notag\\
 &\le N_3 (a,b) \le N_2 (a,b)  \le \frac{A(a,b) + N_1 (a,b)}{2} \notag\\
 &\le \frac{S(a,b) + 2N_1 (a,b)}{3}  \le \frac{S(a,b) + 4N_2 (a,b)}{5} \le A(a,b).
\end{align}

The above inequalities are based on the following two lemmas \cite{tan2}:
\begin{lemma}
Let $f:I \subset {\rm R}_ + \to {\rm R}$ be a convex and
differentiable function satisfying $f(1) = f^\prime (1) = 0$. Consider a
function
\begin{equation}
\label{eq12}
\phi _f (a,b) = af\left( {\frac{b}{a}} \right),
\quad b > a > 0,
\end{equation}

\noindent
then the function $\phi _f (a,b)$ is convex in ${\rm R}_ + ^2 $, and
satisfies the following inequality:
\begin{equation}
\label{eq13}
0 \le \phi _f (a,b) \le \left( {\frac{b - a}{a}} \right)\phi _{f}' (a,b).
\end{equation}
\end{lemma}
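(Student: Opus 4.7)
My plan is to prove the two assertions separately: first the convexity of $\phi_f$ on $\mathbb{R}_+^2$, then the two-sided bound (\ref{eq13}). The key observation is that $\phi_f$ is the ``perspective'' (or homogeneous extension) of $f$, and the hypotheses $f(1)=f'(1)=0$ together with convexity imply that $f$ attains its global minimum value $0$ at $x=1$ and lies above its tangent there.

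For the convexity of $\phi_f$, I would fix $(a_1,b_1),(a_2,b_2)\in\mathbb{R}_+^2$ and $\lambda\in[0,1]$. Writing $a_\lambda=\lambda a_1+(1-\lambda)a_2$ and $b_\lambda=\lambda b_1+(1-\lambda)b_2$, the identity
\[
\frac{b_\lambda}{a_\lambda}=\frac{\lambda a_1}{a_\lambda}\cdot\frac{b_1}{a_1}+\frac{(1-\lambda)a_2}{a_\lambda}\cdot\frac{b_2}{a_2}
\]
expresses $b_\lambda/a_\lambda$ as a convex combination (with positive weights summing to $1$) of $b_1/a_1$ and $b_2/a_2$. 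Applying convexity of $f$ and multiplying through by $a_\lambda>0$ gives
\[
\phi_f(a_\lambda,b_\lambda)\le\lambda\phi_f(a_1,b_1)+(1-\lambda)\phi_f(a_2,b_2),
\]
as required. As a sanity check one could instead compute the Hessian of $\phi_f$ directly; it turns out to be $f''(b/a)$ times the positive semidefinite matrix with entries $b^2/a^3,\,-b/a^2,\,-b/a^2,\,1/a$ (determinant zero, nonnegative diagonal).

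For the two-sided bound, the left inequality $\phi_f(a,b)\ge0$ is immediate: since $f$ is convex and differentiable with $f'(1)=0$, the point $x=1$ is a global minimum, so $f(x)\ge f(1)=0$ for all admissible $x$, whence $af(b/a)\ge0$. For the right inequality I would invoke the tangent-line bound for a convex differentiable function, applied at the base point $x=b/a$ and evaluated at $y=1$:
\[
f(1)\ge f(b/a)+f'(b/a)\bigl(1-b/a\bigr).
\]
Using $f(1)=0$ and rearranging gives $f(b/a)\le (b/a-1)\,f'(b/a)$. Multiplying by $a>0$ yields
\[
af(b/a)\le(b-a)\,f'(b/a)=\frac{b-a}{a}\cdot af'(b/a),
\]
which is precisely $\phi_f(a,b)\le\frac{b-a}{a}\,\phi_{f'}(a,b)$.

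The only genuinely delicate point is notational: clarifying that $\phi_f'(a,b)$ in (\ref{eq13}) denotes the perspective built from the derivative $f'$, namely $af'(b/a)$, rather than a partial derivative of $\phi_f$. Once this reading is adopted (which is forced by homogeneity considerations, since $(b-a)f'(b/a)$ is the only scaling under which both sides have matching homogeneity degree in $(a,b)$), no further obstacle arises: both claims reduce to standard facts, one about perspectives of convex functions and one about tangent-line estimates at the minimum.
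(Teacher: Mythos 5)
Your proof is correct. Note that the paper itself gives no proof of this lemma --- it is quoted from an earlier work of Taneja (reference [3] in the bibliography) --- so there is nothing to compare line by line; but your argument (convexity via the perspective-function identity $b_\lambda/a_\lambda$ as a convex combination of $b_1/a_1$ and $b_2/a_2$, nonnegativity from $x=1$ being a global minimum, and the upper bound from the tangent-line inequality at $x=b/a$ evaluated at $y=1$) is exactly the standard route used for this lemma in the divergence-measure literature, and your reading of $\phi_f'(a,b)$ as $a\,f'(b/a)$ is the intended one.
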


\begin{lemma}
Let $f_1 ,f_2 :I \subset {\rm R}_ + \to {\rm R}$ be two
convex functions satisfying the assumptions:

(i) $f_1 (1) = f_1 ^\prime (1) = 0$, $f_2 (1) = f_2 ^\prime (1) = 0$;

(ii) $f_1 $ and $f_2 $ are twice differentiable in ${\rm R}_ + $;

(iii) there exists the real constants $\alpha ,\beta $ such that $0 \le
\alpha < \beta $ and
\begin{equation}
\label{eq14}
\alpha \le \frac{f_1 ^{\prime \prime }(x)}{f_2 ^{\prime \prime }(x)} \le
\beta ,
\quad
f_2 ^{\prime \prime }(x) > 0,
\end{equation}

\noindent
for all $x > 0$ then we have the inequalities:
\begin{equation}
\label{eq15}
\alpha \mbox{ }\phi _{f_2 } (a,b) \le \phi _{f_1 } (a,b) \le \beta \mbox{
}\phi _{f_2 } (a,b),
\end{equation}

\noindent
for all $a,b \in (0,\infty )$.
\end{lemma}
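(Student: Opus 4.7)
The plan is to reduce each of the two inequalities in (\ref{eq15}) to the nonnegativity statement of Lemma 1.1 applied to two cleverly chosen linear combinations of $f_1$ and $f_2$. Concretely, I would introduce the auxiliary functions $g := f_1 - \alpha f_2$ and $h := \beta f_2 - f_1$, and verify that each one inherits the hypotheses of Lemma 1.1.

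First, I would check that $g$ and $h$ satisfy the requirements of Lemma 1.1. By linearity, condition (i) immediately yields $g(1) = g'(1) = 0$ and $h(1) = h'(1) = 0$. Since $f_2''(x) > 0$, multiplying (\ref{eq14}) through preserves direction and gives $\alpha f_2''(x) \le f_1''(x) \le \beta f_2''(x)$ for all $x > 0$. Hence $g''(x) \ge 0$ and $h''(x) \ge 0$, so both $g$ and $h$ are convex (and differentiable) on ${\rm R}_+$.

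Next, I would apply the nonnegativity bound $0 \le \phi_f(a,b)$ from Lemma 1.1 to $f = g$ and to $f = h$, obtaining $\phi_g(a,b) \ge 0$ and $\phi_h(a,b) \ge 0$ for all $a,b > 0$. Since the map $f \mapsto \phi_f(a,b) = a f(b/a)$ is linear in $f$, one has $\phi_g = \phi_{f_1} - \alpha \phi_{f_2}$ and $\phi_h = \beta \phi_{f_2} - \phi_{f_1}$; rearranging the two nonnegativities produces exactly the chain in (\ref{eq15}). The only subtle point — hardly an obstacle — is the use of $f_2''(x)>0$ to convert the ratio bound in (\ref{eq14}) to an additive bound on second derivatives without a sign flip, which is what secures the convexity of $g$ and $h$; beyond that the argument is pure linearity bookkeeping, and the hypotheses of Lemma 1.1 are precisely those that survive taking such nonnegative linear combinations.
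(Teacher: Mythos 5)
Your argument is correct and is exactly the standard proof of this lemma (the paper itself only cites \cite{tan2} for it without reproducing the argument): setting $g=f_1-\alpha f_2$ and $h=\beta f_2-f_1$, using $f_2''>0$ to turn the ratio bound into $\alpha f_2''\le f_1''\le\beta f_2''$, and then invoking the nonnegativity part of Lemma 1.1 together with the linearity of $f\mapsto\phi_f$ is precisely how the inequality (\ref{eq15}) is obtained. No gaps; the only point worth stating explicitly is that $g(1)=g'(1)=0$ with $g''\ge 0$ already forces $g\ge 0$ on $I$, which is what makes $\phi_g(a,b)=a\,g(b/a)\ge 0$ for $a,b>0$.
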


The following inequality involving \textit{log-mean} is also known in the literature:
\begin{equation}
\label{eq16}
H(a,b) \le G(a,b) \le L(a,b) \le N_1 (a,b),
\end{equation}

\noindent
where, $L(a,b) = \frac{b - a}{\ln b - \ln a}$, $b \ne a$ is the well-known
\textit{logarithmic mean}.

\bigskip
Finally, the expressions (\ref{eq6}) and (\ref{eq16}) lead us to the following
inequality:
\begin{equation}
\label{eq17}
H(a,b) \le G(a,b) \le L(a,b) \le N_1 (a,b)
 \le N_3 (a,b) \le N_2 (a,b) \le A(a,b) \le S(a,b).
\end{equation}

\section{Difference of Means}

Let us rewrite the inequalities (\ref{eq13}) as
\begin{equation}
\label{eq18}
L(a,b) \le N_1 (a,b)
 \le N_3 (a,b) \le N_2 (a,b) \le A(a,b) \le S(a,b).
\end{equation}

In view of (\ref{eq18}), let us consider the following nonnegative differences
\begin{equation}
\label{eq19}
M_{SL} (a,b) = S(a,b) - L(a,b),
\end{equation}
\begin{equation}
\label{eq20}
M_{AL} (a,b) = A(a,b) - L(a,b),
\end{equation}
\begin{equation}
\label{eq21}
M_{N_2 L} (a,b) = N_2 (a,b) - L(a,b),
\end{equation}
\begin{equation}
\label{eq22}
M_{N_3 L} (a,b) = N_3 (a,b) - L(a,b),
\end{equation}

\noindent and
\begin{equation}
\label{eq23}
M_{N_1 L} (a,b) = N_1 (a,b) - L(a,b).
\end{equation}

The inequalities (\ref{eq18}) still admits more nonnegative differences, but they
are already studied before.

\bigskip
Now, we shall prove the convexity of the above differences (\ref{eq19})-(\ref{eq23}).

\begin{theorem}
The means given by (\ref{eq19})-(\ref{eq23}) are nonnegative and
convex in ${\rm R}_ + ^2 $.
\end{theorem}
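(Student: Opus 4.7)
The plan is to reduce the convexity claim to Lemma 1.1 by expressing each difference in the perspective form $\phi_g(a,b) = a\, g(b/a)$ for an appropriate convex $g$. For each mean $M$ appearing in (\ref{eq18}), write $M(a,b) = a\, f_M(b/a)$; setting $x=b/a$ one obtains
\[
f_S(x)=\sqrt{\tfrac{1+x^2}{2}},\quad f_A(x)=\tfrac{1+x}{2},\quad f_{N_2}(x)=\tfrac{(1+\sqrt x)\sqrt{1+x}}{2\sqrt 2},
\]
\[
f_{N_3}(x)=\tfrac{1+\sqrt x + x}{3},\quad f_{N_1}(x)=\tfrac{(1+\sqrt x)^2}{4},\quad f_L(x)=\tfrac{x-1}{\ln x}.
\]
A direct check shows that each of these satisfies the normalization $f_M(1)=1$ and $f_M'(1)=\tfrac12$. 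Consequently, for every $M\in\{S,A,N_2,N_3,N_1\}$ the function $g_M:=f_M-f_L$ satisfies $g_M(1)=g_M'(1)=0$ and
\[
M(a,b)-L(a,b) \;=\; a\, g_M(b/a) \;=\; \phi_{g_M}(a,b),
\]
which is precisely the input Lemma 1.1 requires.

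With this setup the two halves of the theorem decouple. Nonnegativity is immediate from the chain (\ref{eq18}); alternatively, once convexity is established, the left-hand inequality in (\ref{eq13}) recovers it. Convexity of $\phi_{g_M}$ on ${\rm R}_+^2$ reduces, by Lemma 1.1, to showing that each $g_M$ is convex on $(0,\infty)$, i.e.\ that $f_M''(x)\ge f_L''(x)$ for every $x>0$.

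The main obstacle is carrying out this second-derivative comparison. A routine calculation yields
\[
f_L''(x) \;=\; \frac{2(x-1)-(x+1)\ln x}{x^2(\ln x)^3},
\]
which is nonpositive on $(0,\infty)$ via the classical inequality $\ln x \ge 2(x-1)/(x+1)$ for $x\ge 1$ (and its reverse for $0<x<1$). For $M=S$ the function $f_S$ is convex and for $M=A$ it is affine, so $f_M''-f_L''\ge 0$ is automatic; the delicate cases are $M\in\{N_1,N_3,N_2\}$, where $f_M$ itself is concave. For these one computes, for example, $f_{N_1}''(x)=-1/(8x^{3/2})$ and $f_{N_3}''(x)=-1/(12x^{3/2})$, so the desired inequality reduces to a one-variable estimate of $f_L''$ against an explicit power of $x$. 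I would clear denominators and rewrite the target as $(x+1)\ln x \ge 2(x-1) + h(x)(\ln x)^3$ for an explicit $h$, verifying it either by exhibiting sufficient vanishing at $x=1$ together with a signed derivative, or, more systematically, by the substitution $x=e^{2u}$ which converts every expression in $\ln x$ and $\sqrt x$ into an even hyperbolic series in $u$ whose coefficients can be compared term by term. The $N_2$ case is the messiest because $f_{N_2}$ mixes $\sqrt x$ with $\sqrt{1+x}$, but the same strategy applies. Once each $g_M$ is shown to be convex, Lemma 1.1 delivers convexity of $\phi_{g_M}=M-L$ on ${\rm R}_+^2$, completing the proof.
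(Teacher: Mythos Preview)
Your approach matches the paper's: both invoke Lemma~1.1 and reduce each difference $M-L$ to verifying that the generating function $g_M=f_M-f_L$ satisfies $g_M(1)=g_M'(1)=0$ and $g_M''\ge 0$ on $(0,\infty)$, with the $S$ and $A$ cases handled exactly as you do via the concavity of $f_L$ (equivalently, the known inequality $A\ge L$). The only distinction is in the three concave cases $N_1,N_2,N_3$: the paper settles $g_M''\ge 0$ there simply by displaying a graph of the second derivative, whereas you outline (without executing) an analytic route via the substitution $x=e^{2u}$ and termwise series comparison---so your sketch is, if anything, more rigorous than the paper's own treatment of those cases.
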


We shall use Lemma 1.1 to prove the above theorem. We shall
write each measure in the form of generating function, and then give their
first and second order derivatives. According to Lemma 1.1 it is sufficient
to show that the second order derivative in each case is nonnegative.

\newpage
\noindent
\textbf{$\bullet$ For $M_{SL} (a,b)$}: We can write

\[
M_{SL} (a,b) = a\;f_{SL} \left( {\frac{b}{a}} \right),
\quad
a \ne b,\;a,b > 0
\]

\noindent where

\[
f_{SL} (x) = \sqrt {\frac{x^2 + 1}{2}} - \frac{x - 1}{\ln x},
\quad
x \ne 1.
\]

\bigskip
Writing the first and second order derivatives of $f_{SL} (x)$, we have

\[
{f}'_{_{SL} } (x) = \frac{x}{\sqrt {2(x^2 + 1)} } - \frac{x\ln x - x +
1}{x\left( {\ln x} \right)^2},
\quad
x \ne 1
\]

\noindent and
\begin{align}
\label{eq24}
{f}''_{SL} (x) & = \frac{\left( {2x^2 + 2} \right)^{3 / 2}\left[ {(x + 1)\ln x
- 2(x - 1)} \right] + 2x^2\left( {\ln x} \right)^3}{x^2\left( {2x^2 + 2}
\right)^{3 / 2}\left( {\ln x} \right)^3}\notag\\
& = \frac{2}{\left( {2x^2 + 2} \right)^{3 / 2}} + \frac{2}{x^2\left( {\ln x}
\right)^2}\left( {\frac{x + 1}{2} - \frac{x - 1}{\ln x}} \right) \notag\\
& = \frac{2}{\left( {2x^2 + 2} \right)^{3 / 2}} + k(x),\; x \ne 1,
\end{align}

\noindent where
\begin{equation}
\label{eq25}
k(x) = \frac{2}{x^2\left( {\ln x} \right)^2}\left( {\frac{x + 1}{2} -
\frac{x - 1}{\ln x}} \right) = \frac{2}{x^2\left( {\ln x} \right)^2}f_{AL}
(x),\;x \ne 1,\;x > 0.
\end{equation}

We can easily check that
\begin{equation}
\label{eq26}
k(1) = \mathop {\lim }\limits_{x \to 1} k(x) = \frac{1}{6}.
\end{equation}

Since $A(a,b) \ge L(a,b)$, this gives that ${f}''_{AL} (x) \ge 0$, $\forall
x \in (0,\infty ),\;x \ne 1$. Moreover, $\mathop {\lim }\limits_{x \to 1}
f_{AL} (1) = \mathop {\lim }\limits_{x \to 1} {f}'_{AL} (1) = 0$.

\bigskip
\noindent
\textbf{$\bullet$ For $M_{AL} (a,b)$:}  We can write

\[
M_{AL} (a,b) = a\;f_{AL} \left( {\frac{b}{a}} \right),
\quad
a \ne b,\;a,b > 0
\]

\noindent where

\[
f_{AL} (x) = \frac{x + 1}{2} - \frac{x - 1}{\ln x},
\quad
x \ne 1
\]

\bigskip
Writing the first and second order derivatives of $f_{AL} (x)$, we have

\[
{f}'_{AL} (x) = \frac{x\ln x\left( {\ln x - 2} \right) + 2(x - 1)}{2x\left(
{\ln x} \right)^2},
\quad
x \ne 1,
\]

\noindent and
\begin{equation}
\label{eq27}
{f}''_{AL} (x) = \frac{(x + 1)\ln x - 2(x - 1)}{x^2\left( {\ln x} \right)^3}
= k(x),
\quad
x \ne 1,\;x > 0,
\end{equation}

Since $A(a,b) \ge L(a,b)$, this gives that ${f}''_{AL} (x) \ge 0$, $\forall
x \in (0,\infty ),\;x \ne 1$. Moreover, $\mathop {\lim }\limits_{x \to 1}
f_{AL} (1) = \mathop {\lim }\limits_{x \to 1} {f}'_{AL} (1) = 0$.

\bigskip
\noindent
\textbf{$\bullet$ For $M_{N_2 L} (a,b)$:}  We can write

\[
M_{N_2 L} (a,b) = a\;f_{N_2 L} \left( {\frac{b}{a}} \right),
\quad
a \ne b,\;a,b > 0
\]

\noindent where

\[
f_{N_2 L} (x) = \frac{x + \sqrt x + 1}{3} - \frac{x - 1}{\ln x},
\quad
x \ne 1
\]

\bigskip
Writing the first and second order derivatives of $f_{N_2 L} (x)$, we have

\[
{f}'_{N_2 L} (x) = \frac{\sqrt x \left( {2\sqrt x + 1} \right)\left( {\ln x}
\right)^2 - 6\left( {x\ln x - (x - 1)} \right)}{6x\left( {\ln x}
\right)^2},
\]

\noindent and
\begin{align}
\label{eq28}
{f}''_{N_2 L} (x) & = \frac{12\left( {(x + 1)\ln x - 2(x - 1)} \right) - \sqrt
x \left( {\ln x} \right)^3}{12x^2\left( {\ln x} \right)^3}. \notag\\
& = \frac{1}{12x^2(\ln x)^2}\left[ { - \sqrt x \left( {\ln x} \right)^2 +
24\left( {\frac{x + 1}{2} - \frac{x - 1}{\ln x}} \right)} \right] \notag\\
& = - \frac{1}{12x^{3 / 2}} + \frac{2}{x^2(\ln x)^2}\left( {\frac{x + 1}{2} -
\frac{x - 1}{\ln x}} \right) \notag\\
&= - \frac{1}{12x^{3 / 2}} + k(x).
 \end{align}

The above graph of the function ${f}''_{N_2 L} (x)$ is given by

\begin{center}
\includegraphics[bb=0mm 0mm 208mm 296mm, width=45.6mm, height=45.6mm, viewport=3mm 4mm 205mm 292mm]{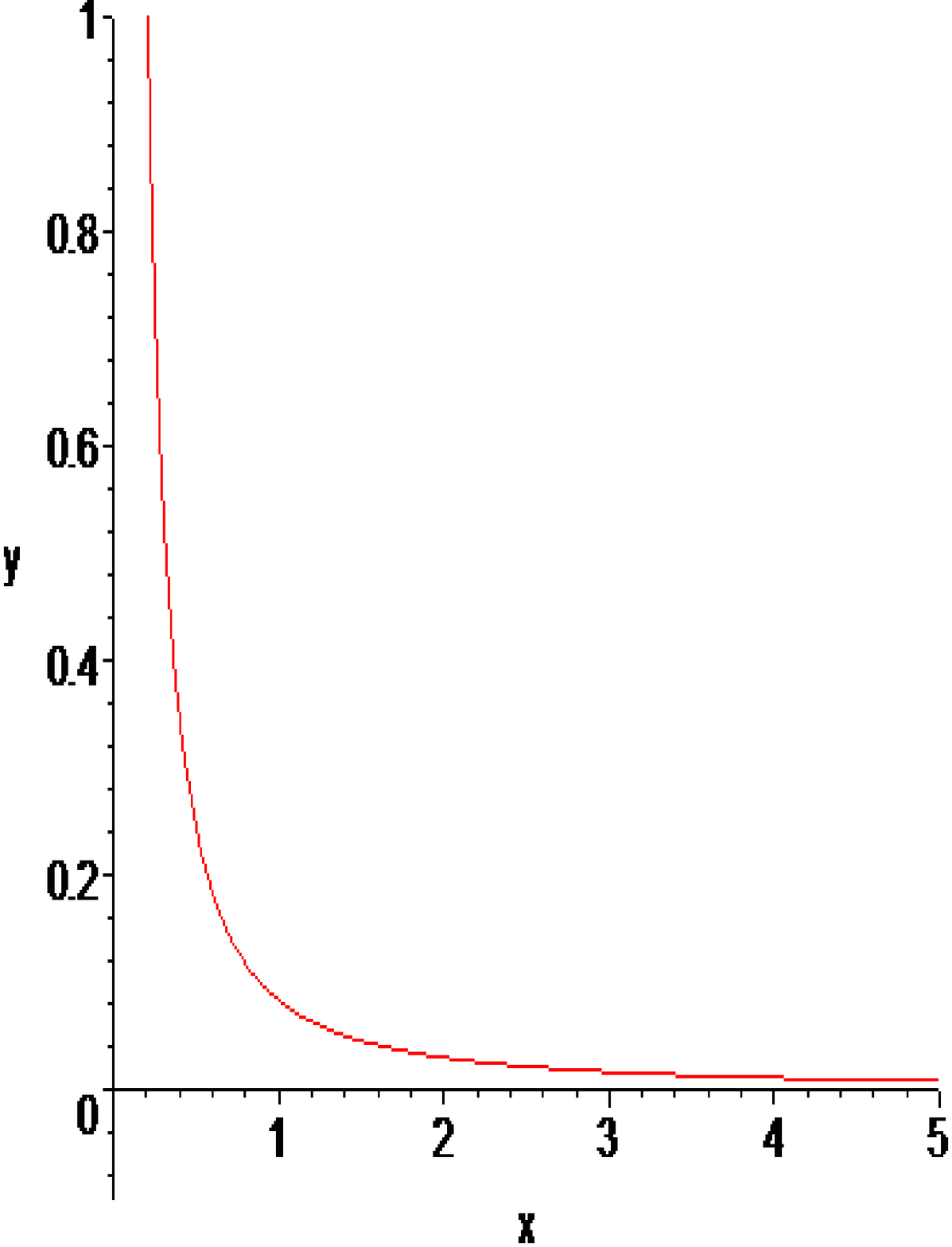}
\end{center}

From the above graph we observe that the function $k_1 (x)$ is nonnegative
for all $x \in (0,\infty )$, and consequently, ${f}''_{N_2 L} (x) \ge 0$,
$\forall x \in (0,\infty ),\;x \ne 1$.

\bigskip
Moreover, $\mathop {\lim }\limits_{x
\to 1} f_{N_2 L} (1) = \mathop {\lim }\limits_{x \to 1} {f}'_{N_2 L} (1) =
0$. Also we have

\[
\mathop {\lim }\limits_{x \to \infty } {f}''_{N_2 L} (x) = \mathop {\lim
}\limits_{x \to \infty } \left( { - \frac{1}{12x^{3 / 2}} + k(x)} \right) =
\infty
\]

\bigskip
\noindent
\textbf{$\bullet$ For $M_{N_3 L} (a,b)$:} We can write

\[
M_{N_3 L} (a,b) = a\;f_{N_3 L} \left( {\frac{b}{a}} \right),
\quad
a \ne b,\; a,b > 0
\]

\noindent where

\[
f_{N_3 L} (x) = \frac{\left( {\sqrt x + 1} \right)\sqrt {2(x + 1)} }{4} -
\frac{x - 1}{\ln x},
\quad
x \ne 1
\]

\bigskip
Writing the first and second order derivatives of $f_{N_3 L} (x)$, we have

\[
{f}'_{N_3 L} (x) = \frac{\sqrt x \left( {\ln x} \right)^2\left( {2x + \sqrt
x + 1} \right) - 4\sqrt {2(x + 1)} \left( {x\ln x - x + 1} \right)}{4x\left(
{\ln x} \right)^2\sqrt {2(x + 1)} },
\]

\noindent and
\begin{align}
\label{eq28a}
{f}''_{N_3 L} (x) & = \frac{4(2x + 2)^{3 / 2}\left( {(x + 1)\ln x - 2(x - 1)}
\right) - \sqrt x \left( {x^{3 / 2} + 1} \right)\left( {\ln x}
\right)^3}{4x^2\left( {\ln x} \right)^3(2x + 2)^{3 / 2}} \notag\\
& = \frac{1}{8x^2\left( {\ln x} \right)^2}\left[ { - \sqrt x \left(
{\frac{x^{3 / 2} + 1}{(2x + 2)^{3 / 2}}} \right)\left( {\ln x} \right)^2 +
16\left( {\frac{x + 1}{2} - \frac{x - 1}{\ln x}} \right)} \right] \notag\\
& = - \frac{(x^{3 / 2} + 1)}{8x^{3 / 2}(2x + 2)^{3 / 2} }+ k(x)
\end{align}

The graph of the function ${f}''_{N_3 L} (x)$ is given by

\begin{center}
\includegraphics[bb=0mm 0mm 208mm 296mm, width=45.6mm, height=45.6mm, viewport=3mm 4mm 205mm 292mm]{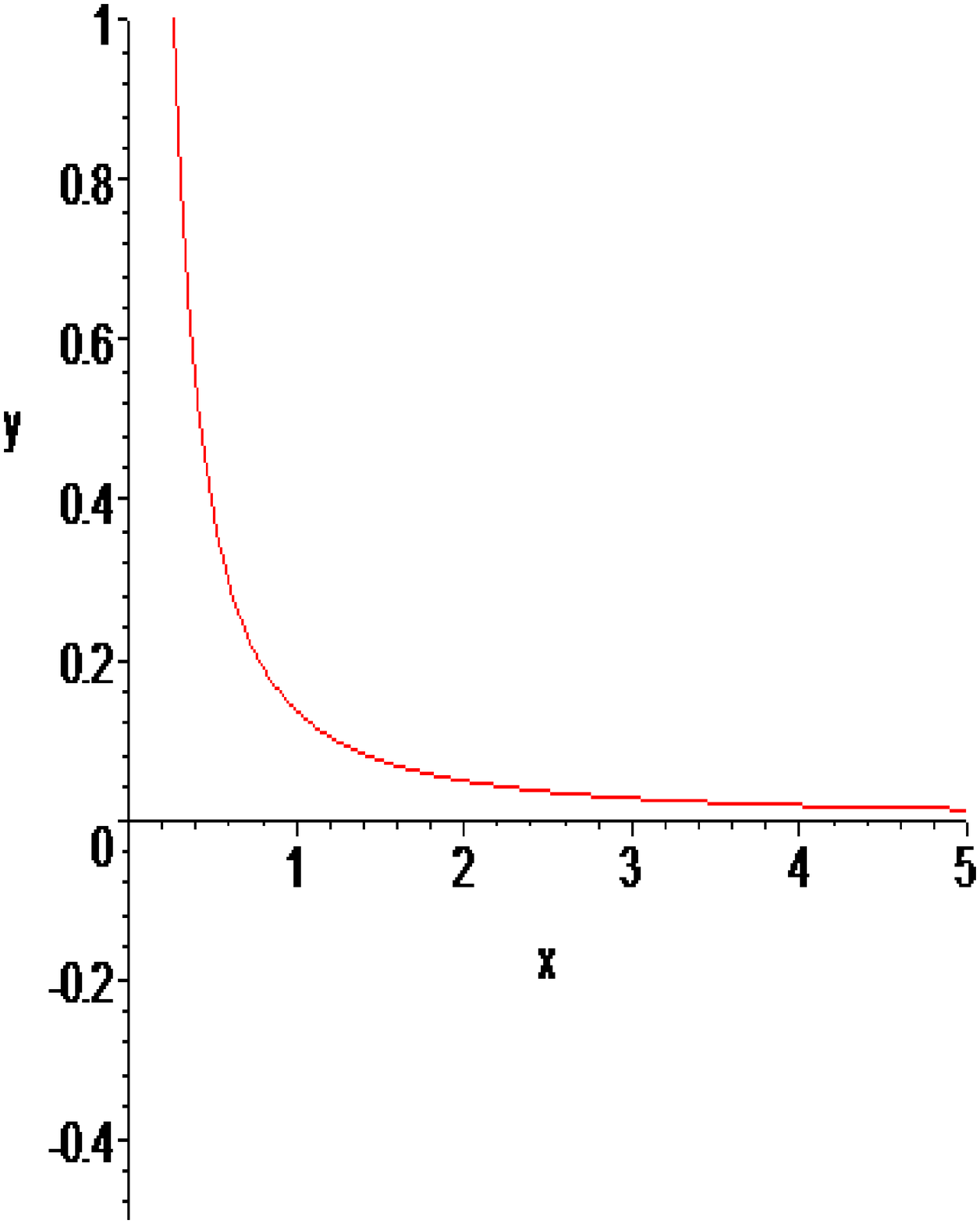}
\end{center}

From the above graph we observe that the function ${f}''_{N_3 L} (x)$ is
nonnegative for all $x \in (0,\infty )$, and consequently, ${f}''_{N_3 L}
(x) \ge 0$, $\forall x \in (0,\infty )$.

\bigskip
Moreover, $\mathop {\lim }\limits_{x \to 1} f_{N_3 L} (1) = \mathop {\lim
}\limits_{x \to 1} {f}'_{N_3 L} (1) = 0$. Also we have

\[
\mathop {\lim }\limits_{x \to \infty } {f}''_{N_3 L} (x) = \mathop {\lim
}\limits_{x \to \infty } \left( { - \frac{(x^{3 / 2} + 1)}{8x^{3 / 2}(2x +
2)^{3 /2} + k(x)} }\right) = \infty
\]

\bigskip
\noindent
\textbf{$\bullet$ For $M_{N_1 L} (a,b)$:} We can write

\[
M_{N_1 L} (a,b) = a\;f_{N_1 L} \left( {\frac{b}{a}} \right),
\quad
a \ne b,\;a,b > 0,
\]

\noindent where

\[
f_{N_1 L} (x) = \left( {\frac{\sqrt x + 1}{2}} \right)^2 - \frac{x - 1}{\ln
x},
\quad
x \ne 1
\]

\bigskip
Writing the first and second order derivatives of $f_{N_1 L} (x)$, we have

\[
{f}'_{_{N_1 L} } (x) = \frac{x\left( {\ln x} \right)^2(\sqrt x + 1) - 4\sqrt
x \left( {x\ln x - x + 1} \right)}{4x^{3 / 2}\left( {\ln x} \right)^2}
\]

\noindent and
\begin{align}
\label{eq29}
f''_{N_{1} L} (x) & =\frac{8\left[(x+1)\ln x-2(x-1)\right]-\sqrt{x} \left(\ln x\right)^{3} }{8x^{2} \left(\ln x\right)^{3} } \notag\\
&=\frac{1}{8x^{2} \left(\ln x\right)^{2} } \left[-\sqrt{x} \left(\ln x\right)^{2} +16\left(\frac{x+1}{2} -\frac{x-1}{\ln x} \right)\right]\notag\\
& =-\frac{1}{8x^{3/2} } +\frac{2}{x^{2} \left(\ln x\right)^{2} } \left(\frac{x+1}{2} -\frac{x-1}{\ln x} \right)=-\frac{1}{8x^{3/2} } +k(x)
\end{align}

The graph of the function ${f}''_{N_1 L} (x)$ is given by

\begin{center}
\includegraphics[bb=0mm 0mm 208mm 296mm, width=45.6mm, height=45.6mm, viewport=3mm 4mm 205mm 292mm]{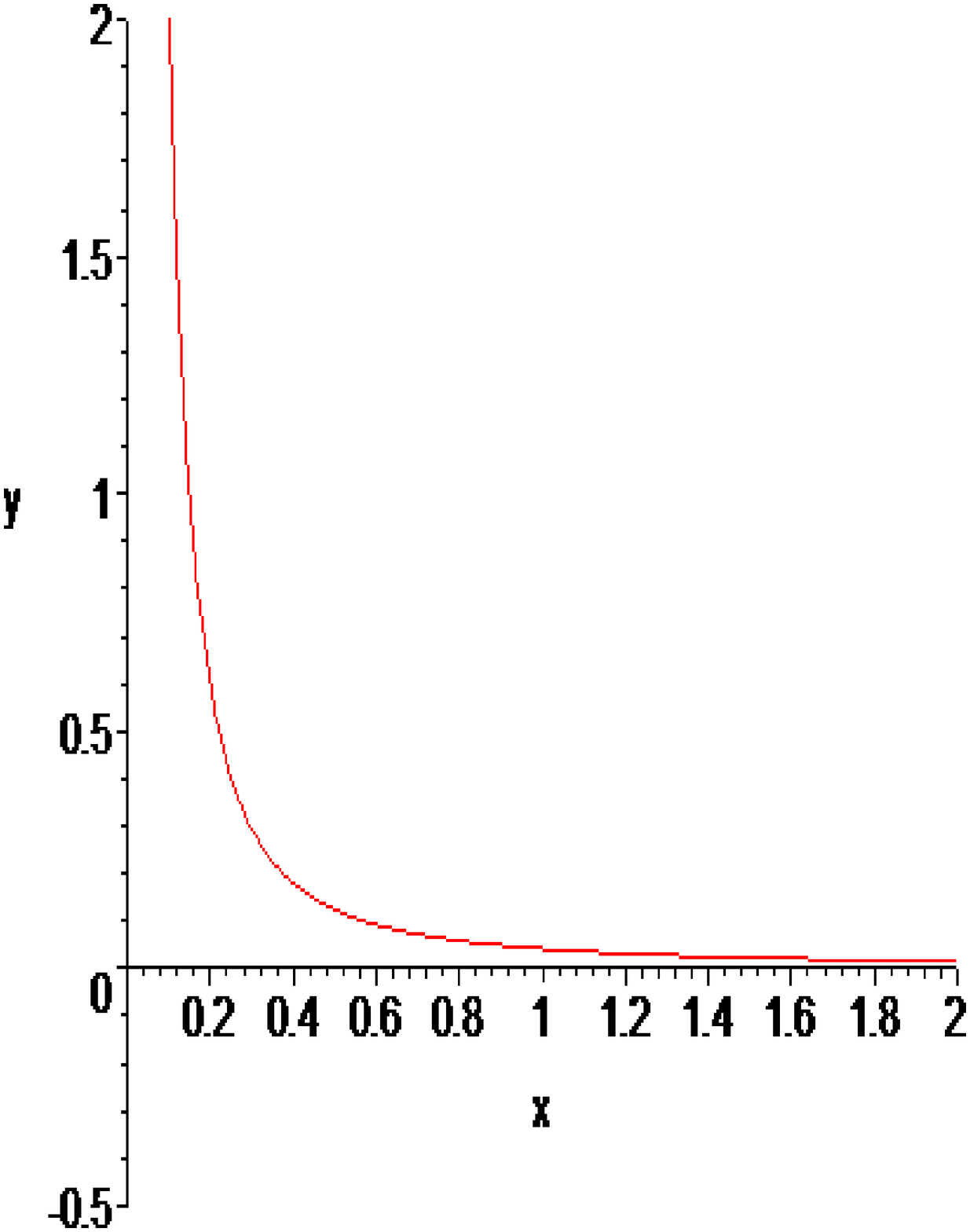}
\end{center}

From the above graph we observe that the function ${f}''_{N_1 L} (x)$ is
nonnegative for all $x \in (0,\infty )$, and consequently, ${f}''_{N_1 L}
(x) \ge 0$, $\forall x \in (0,\infty )$.

\bigskip
Moreover, $\mathop {\lim }\limits_{x \to 1} f_{N_1 L} (1) = \mathop {\lim
}\limits_{x \to 1} {f}'_{N_1 L} (1) = 0$. Also we have

\[
\mathop {\lim }\limits_{x \to \infty } {f}''_{N_1 L} (x) = \mathop {\lim
}\limits_{x \to \infty } \left( { - \frac{1}{8x^{3 / 2}} + k(x)} \right) =
\infty
\]

\bigskip
We see that in all the cases the generating function $f_{( \cdot )} (1) =
{f}'_{( \cdot )} (1) = 0$ and the second derivative is positive for all $x
\in (0,\infty )$. This proves the \textit{nonegativity} and \textit{convexity} of the means (\ref{eq19})-(\ref{eq23}) in ${\rm
R}_ + ^2 $. This completes the proof of the theorem.

\section{Log-Mean Inequalities}

In view of (\ref{eq18}), the following inequalities are obvious
\begin{align}
\label{eq31}
M_{N_1 L} (a,b)
& \le M_{N_3 L} (a,b) \le M_{N_2 L} (a,b) \le M_{AL} (a,b) \le M_{SL} (a,b),\\
\label{eq32}
M_{N_3 N_1 } (a,b) & \le M_{N_2 N_1 } (a,b) \le M_{AN_1 } (a,b) \le M_{SN_1 }
(a,b),\\
\label{eq33}
M_{N_2 N_3 } (a,b) & \le M_{AN_3 } (a,b) \le M_{SN_3 } (a,b)\\
\intertext{and}
\label{eq34}
M_{AN_3 } (a,b) & \le M_{SN_2 } (a,b).
\end{align}

Here we shall prove an improvement over the inequalities (\ref{eq31}). While, the
improvement over the inequalities (\ref{eq32})-(\ref{eq34}) is already given in
(\ref{eq9})-(\ref{eq11}). The following theorem holds:

\begin{theorem} The following inequalities hold:
\begin{equation}
\label{eq35}
M_{SL} (a,b) \le \frac{5}{2}M_{AL} (a,b) \le 5M_{N_2 L} (a,b) \le 6M_{N_1 L}
\end{equation}

\noindent and
\begin{equation}
\label{eq36}
M_{SL} (a,b) \le 4\;M_{N_3 L} (a,b) \le 10\;M_{N_1 L} .
\end{equation}
\end{theorem}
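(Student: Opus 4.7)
The natural approach is to reduce each link in (\ref{eq35}) and (\ref{eq36}) to Lemma 1.2 applied to the generating functions already constructed in the proof of Theorem 2.1. Because every difference $M_{XL}(a,b) = a f_{XL}(b/a)$ satisfies $f_{XL}(1) = f'_{XL}(1) = 0$, any inequality of the form $c_1 M_X(a,b) \le c_2 M_Y(a,b)$ follows at once, via Lemma 1.2, from the pointwise bound $c_1 f''_{XL}(x) \le c_2 f''_{YL}(x)$ on $(0,\infty)$. The five pairs of second derivatives needed are those already exhibited in (\ref{eq24}), (\ref{eq27}), (\ref{eq28}), (\ref{eq28a}), and (\ref{eq29}), and it suffices to establish the five individual constants $\beta \in \{5/2,\,2,\,6/5,\,4,\,5/2\}$ as suprema of the respective ratios $f''_X/f''_Y$.

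The structural feature that makes these ratio comparisons manageable is that every $f''_{XL}$ appearing above decomposes as $f''_{XL}(x) = g_X(x) + k(x)$, where $k(x) = f''_{AL}(x)$ is a common transcendental piece (with $k(1) = 1/6$) and $g_X(x)$ is a simple algebraic correction in $x^{3/2}$ or $(2x^2+2)^{3/2}$. Consequently, each link reduces to a single algebraic--versus--logarithmic inequality; for instance $M_{SL} \le \tfrac{5}{2}M_{AL}$ becomes $\tfrac{2}{(2x^2+2)^{3/2}} \le \tfrac{3}{2}k(x)$, and $M_{AL} \le 2M_{N_2 L}$ becomes $k(x) \ge \tfrac{1}{6x^{3/2}}$. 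The constants $5/2$, $5$, $6$, $4$, $10$ in the statement are precisely those which force equality in the limit $x\to 1$, so the bounds are tight at that point and cannot be improved by this method.

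\textbf{Main obstacle.} The hard part is verifying each such ratio bound globally on $(0,\infty)$. Direct algebraic manipulation is awkward because $k(x)$ contains $\ln x$ while the corrections involve only powers of $x$, so the cleanest route is to rearrange each link into a single auxiliary function $h_X(x)$ vanishing at $x = 1$, confirm by Taylor expansion around $x=1$ that this zero is of even order with the correct sign (matching the tightness noted above), and then control the asymptotics as $x \to 0^+$ and $x \to \infty$ together with monotonicity on the intermediate range. In the spirit of the graphical verifications used throughout the proof of Theorem 2.1, the intermediate-range sign may be supported by the corresponding plot of $h_X$. Once the five individual ratio bounds are in hand, chaining them in the prescribed order produces (\ref{eq35}) and (\ref{eq36}).
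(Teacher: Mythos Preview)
Your strategy coincides with the paper's: apply Lemma~1.2 by bounding each ratio $f''_X/f''_Y$ on $(0,\infty)$, with the supremum attained at $x=1$. The decomposition $f''_{XL}=g_X+k$ you emphasise is exactly how the paper records the second derivatives in (\ref{eq24})--(\ref{eq29}); the paper then differentiates the ratio and reads off the sign of $g'$ from a plot, just as you propose to handle your auxiliary functions $h_X$.

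There is, however, a concrete numerical slip in your list of constants. For the third link of (\ref{eq35}) you claim the supremum of $f''_{N_2L}/f''_{N_1L}$ is $6/5$, but evaluating at $x=1$ (using $k(1)=1/6$) gives
\[
\frac{f''_{N_2L}(1)}{f''_{N_1L}(1)}
=\frac{-\tfrac{1}{12}+k(1)}{-\tfrac{1}{8}+k(1)}
=\frac{-\tfrac{1}{12}+\tfrac{1}{6}}{-\tfrac{1}{8}+\tfrac{1}{6}}
=\frac{1/12}{1/24}=2,
\]
so the tight constant for this pair is $2$, not $6/5$. The paper's Proposition~3.3 establishes precisely $M_{N_2L}\le 2\,M_{N_1L}$, which chains only to $5M_{N_2L}\le 10\,M_{N_1L}$; the coefficient $6$ printed in (\ref{eq35}) is a misprint and cannot be reached by this method---indeed the inequality $5M_{N_2L}\le 6M_{N_1L}$ fails in a neighbourhood of $x=1$. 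With the constant corrected to $2$ your outline is complete and matches the paper's proof.
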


The proof of the above theorem is based on Lemma 1.2 and is given in parts
in the following propositions.

\begin{proposition}  We have
\begin{equation}
\label{eq37}
M_{SL} (a,b) \le \frac{5}{2}M_{AL} (a,b).
\end{equation}
\end{proposition}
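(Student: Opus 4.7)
The plan is to apply Lemma 1.2 with $f_1 = f_{SL}$ and $f_2 = f_{AL}$. Both generating functions satisfy $f(1) = f'(1) = 0$ and are convex on $(0,\infty)$ by Theorem 2.1, so if we can verify the uniform bound
\[
\frac{f''_{SL}(x)}{f''_{AL}(x)} \le \frac{5}{2}, \qquad x \in (0,\infty),\; x \ne 1,
\]
then (\ref{eq15}) gives immediately $\phi_{f_{SL}}(a,b) \le \tfrac{5}{2}\,\phi_{f_{AL}}(a,b)$, which is exactly (\ref{eq37}).

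Using the expressions (\ref{eq24}) and (\ref{eq27}), we have $f''_{AL}(x) = k(x)$ and $f''_{SL}(x) = \frac{2}{(2x^2+2)^{3/2}} + k(x)$, so
\[
\frac{f''_{SL}(x)}{f''_{AL}(x)} \;=\; 1 \;+\; \frac{2}{(2x^2+2)^{3/2}\,k(x)}.
\]
Hence the required bound is equivalent to
\[
h(x) \;:=\; (2x^2+2)^{3/2}\,k(x) \;\ge\; \frac{4}{3}, \qquad x \in (0,\infty).
\]
A direct check using (\ref{eq26}) gives $h(1) = 8 \cdot \tfrac{1}{6} = \tfrac{4}{3}$, so the constant $5/2$ is sharp at $x=1$; the substance of the proposition is that $x=1$ is a global minimum of $h$.

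The step I would carry out next is to verify this minimum property. First I would confirm that $h'(1)=0$ and $h''(1)>0$ via a Taylor expansion of $k(x)$ around $x=1$ (set $x = 1+u$, expand $\ln x$ and $f_{AL}(x)$ as power series in $u$, divide by $x^2(\ln x)^2$), which makes $h$ smooth at $1$ and gives a local minimum there. Then I would examine the behavior at the endpoints: as $x \to \infty$ the factor $(2x^2+2)^{3/2}$ grows like $x^3$ while $k(x) \sim \frac{x+1}{x^2 \ln x}$, so $h(x) \to \infty$; as $x \to 0^+$ a similar asymptotic gives $h(x) \to \infty$. Combined with monotonicity checks on $(0,1)$ and $(1,\infty)$, this pins the global minimum at $x=1$.

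The main obstacle will be ruling out interior minima of $h$ other than at $x=1$. Rather than analyze $h'(x)=0$ explicitly (which is messy because $k$ contains both $\ln x$ and $(\ln x)^{-1}$ terms), I would introduce the substitution $u = \ln x$ and rewrite $k$ and $(2x^2+2)^{3/2}$ as functions of $u$, then verify the single-well shape either by a Taylor-with-remainder argument on a neighborhood of $u=0$ combined with monotonicity away from it, or, failing that, fall back on the graphical/numerical evidence that the paper uses elsewhere (cf.\ the graphs of $f''_{N_2 L}, f''_{N_3 L}, f''_{N_1 L}$). Once $h(x) \ge 4/3$ is established, Lemma 1.2 closes the argument.
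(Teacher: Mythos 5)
Your proposal follows essentially the same route as the paper: both apply Lemma 1.2 to the ratio $f''_{SL}(x)/f''_{AL}(x)$ and reduce the matter to showing that this ratio attains its supremum $5/2$ at $x=1$; your reformulation $h(x)=(2x^2+2)^{3/2}k(x)\ge \tfrac{4}{3}$ is an algebraic repackaging of the paper's claim that $g_{SL\_AL}$ is maximized at $x=1$ (where you evaluate the value there more cleanly via $k(1)=\tfrac16$ instead of the paper's triple L'H\^opital). The decisive global-extremum step is settled in the paper only by inspecting the graph of $g'_{SL\_AL}$, and your proposed fallback to graphical or numerical evidence leaves the argument at the same level of rigor.
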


\begin{proof} Let us consider
\[
g_{SL\_AL} (x) = \frac{{f}''_{SL} (x)}{{f}''_{AL} (x)} = \frac{2x^2(\ln x)^3
+ (2 + 2x^2)^{3 / 2}\left[ {(x + 1)\ln x - 2(x - 1)} \right]}{(2 + 2x^2)^{3
/ 2}\left[ {(x + 1)\ln x - 2(x - 1)} \right]},\,\;x \ne 1
\]

\noindent
for all $x \in (0,\infty )$, where ${f}''_{SL} (x)$ and ${f}''_{AL} (x)$ are
as given by (\ref{eq24}) and (\ref{eq27}) respectively.

\bigskip
Calculating the first order derivative of the function $g_{SL\_AL} (x)$ with
respect to $x$, one gets
\begin{align}
 {g}'_{SH\_SL} (x) =&  \frac{4x(\ln x)^2}{(2 + 2x^2)^{5 / 2}\left[ {(x + 1)\ln
x - 2(x - 1)} \right]^2}\times \notag\\
&\left\{ {6(x - 1)(x^2 + 1) - \ln x\left[ {6(x^3 + 1) - (x^2 - x - 2)\ln x}
\right] + 2x^3(\ln x)^2} \right\} \notag
\end{align}

\bigskip
The graph of the function ${g}'_{SH\_SL} (x)$ is given by

\begin{center}
\includegraphics[bb=0mm 0mm 208mm 296mm, width=45.6mm, height=45.6mm, viewport=3mm 4mm 205mm 292mm]{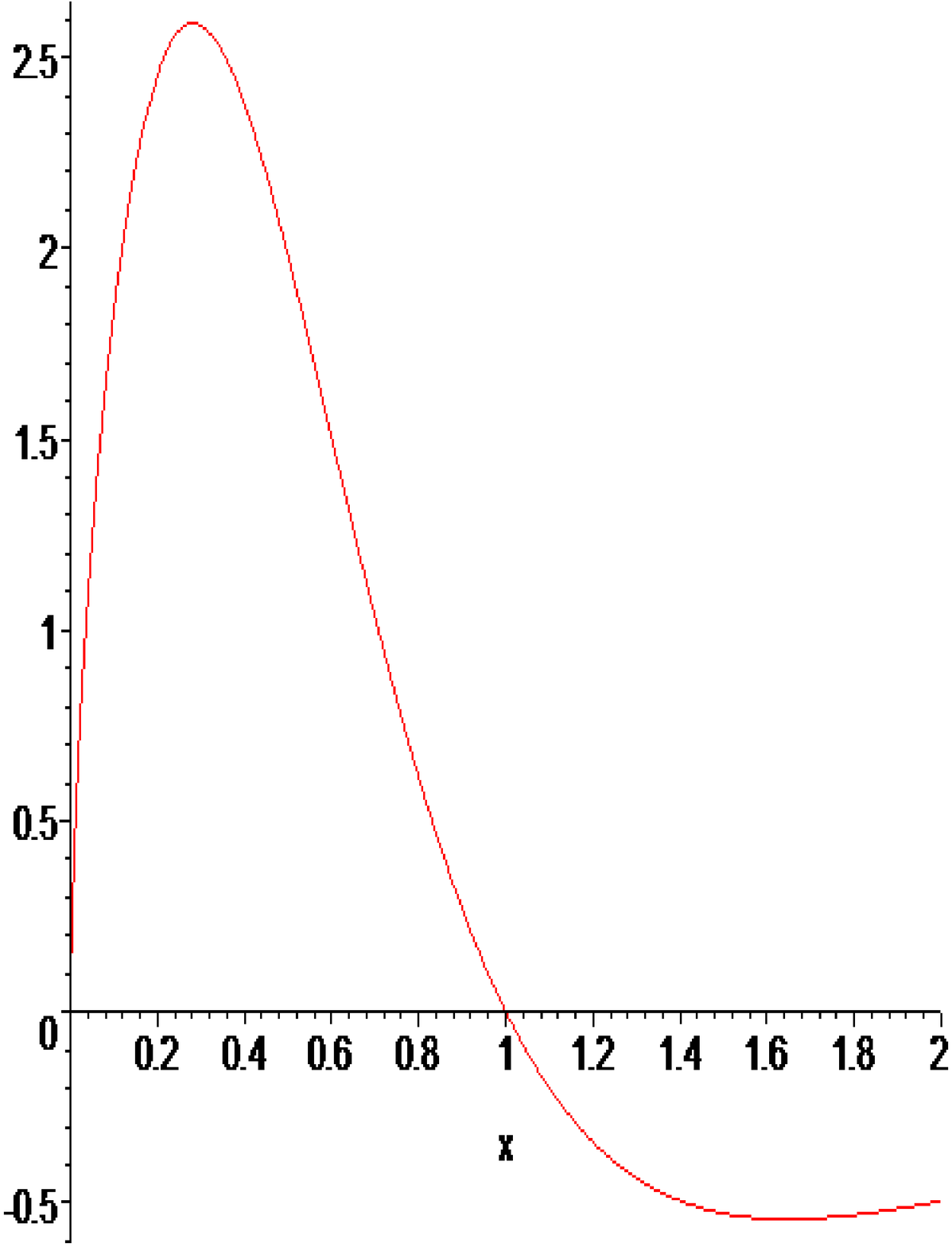}
\end{center}

\bigskip
We observe from the above graph the following:
\begin{equation}
\label{eq38}
{g}'_{SL\_AL} (x) = \left\{ {{\begin{array}{*{20}c}
 { > 0,} & {x < 1} \\
 { < 0,} & {x > 1} \\
\end{array} }} \right.
\end{equation}

Let us calculate now $g_{SL\_AL} (1)$. We observe that
\[
\left. {g_{SL\_AL} (x)} \right|_{x = 1} = \left. {\frac{\left( {{f}''_{SL}
(x)} \right)^\prime }{\left( {{f}''_{AL} (x)} \right)^\prime }} \right|_{x =
1} = \left. {\frac{\left( {{f}''_{SL} (x)} \right)^{\prime \prime }}{\left(
{{f}''_{AL} (x)} \right)^{\prime \prime }}} \right|_{x = 1} =
\mbox{indermination}.
\]

\bigskip
Calculating third order derivatives of numerator and denominator, we have
\begin{equation}
\label{eq39}
g_{SL\_AL} (1) = \left. {\frac{\left( {{f}''_{SL} (x)} \right)^{\prime
\prime \prime }}{\left( {{f}''_{AL} (x)} \right)^{\prime \prime \prime }}}
\right|_{x = 1} = \frac{10}{4} = \frac{5}{2}.
\end{equation}

By the application of (\ref{eq14}) with (\ref{eq39}) we get (\ref{eq37}).
\end{proof}
\begin{proposition}  We have
\begin{equation}
\label{eq40}
M_{AL} (a,b) \le 2M_{N_2 L} (a,b).
\end{equation}
\end{proposition}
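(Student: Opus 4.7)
The plan is to apply Lemma 1.2 with $f_1 = f_{AL}$, $f_2 = f_{N_2 L}$ and upper constant $\beta = 2$, in exactly the same way as in Proposition 3.1. Using (\ref{eq27}) and (\ref{eq28}), I would form
\[
g_{AL\_N_2 L}(x) = \frac{{f}''_{AL}(x)}{{f}''_{N_2 L}(x)} = \frac{k(x)}{k(x) - \tfrac{1}{12 x^{3/2}}},
\]
which is well-defined and strictly positive on $(0,\infty) \setminus \{1\}$ because Theorem 2.1 has already shown ${f}''_{N_2 L}(x) > 0$.

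First I would evaluate this ratio at $x = 1$. Since $k(1) = 1/6$ by (\ref{eq26}), direct substitution gives ${f}''_{AL}(1) = 1/6$ and ${f}''_{N_2 L}(1) = 1/6 - 1/12 = 1/12$, so $g_{AL\_N_2 L}(1) = 2$; unlike in Proposition 3.1, no iterated L'H\^opital rule is needed, because neither numerator nor denominator of $g_{AL\_N_2 L}$ vanishes at $x=1$.

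Next I would show $g_{AL\_N_2 L}(x) \le 2$ for all $x > 0$, $x \ne 1$, which is equivalent to the pointwise inequality $k(x) \ge 1/(6 x^{3/2})$. Mirroring the preceding proposition, I would compute ${g}'_{AL\_N_2 L}(x)$ in closed form, read off from its graph the sign pattern ``$>0$ for $x<1$, $<0$ for $x>1$'', and conclude that $x=1$ is the global maximum of $g_{AL\_N_2 L}$ on $(0,\infty) \setminus \{1\}$, with value $2$. Then Lemma 1.2 applied with $\alpha = 0$ and $\beta = 2$ gives $\phi_{f_{AL}} \le 2 \phi_{f_{N_2 L}}$, which is exactly (\ref{eq40}).

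The main obstacle I anticipate is certifying the sign pattern of ${g}'_{AL\_N_2 L}$ rigorously rather than graphically: once the fraction is cleared, the sign of the numerator is controlled by a combination of $(\ln x)^k$, polynomial, and $\sqrt{x}$ terms for which elementary monotonicity is not transparent. A Taylor expansion of $k(x) - 1/(6 x^{3/2})$ about $x = 1$, where this difference vanishes to second order, together with asymptotic checks as $x \to 0^+$ and $x \to \infty$, should close the gap if one prefers to avoid the graphical shortcut used throughout this paper.
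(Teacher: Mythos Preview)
Your proposal follows essentially the same approach as the paper: form the ratio $g_{AL\_N_2 L}(x) = f''_{AL}(x)/f''_{N_2 L}(x)$, read off from the sign of its derivative that the global maximum occurs at $x=1$ with value $2$, and then invoke Lemma~1.2 with $\beta = 2$. The only difference is that the paper writes the ratio in a form that is $0/0$ at $x=1$ and uses iterated L'H\^opital to obtain the value $2$, whereas your use of the $k(x)$ representation (with $k(1)=1/6$) lets you evaluate $g_{AL\_N_2 L}(1)=2$ by direct substitution; both routes reach the same conclusion.
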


\begin{proof}
Let us consider
\[
g_{AL\_N_2 L} (x) = \frac{{f}''_{AL} (x)}{{f}''_{N_2 L} (x)} = \frac{12x^{3
/ 2}\left[ {(x + 1)\ln x - 2(x - 1)} \right]}{x^2(\ln x)^3 - 12x^{3 /
2}\left[ {(x + 1)\ln x - 2(x - 1)} \right]},
\quad
x \ne 1
\]

\noindent for all $x \in (0,\infty )$, where ${f}''_{AL} (x)$ and ${f}''_{N_2 L} (x)$
are as given by (\ref{eq27}) and (\ref{eq28}) respectively.

\bigskip
Calculating the first order derivative of the function $g_{AL\_N_2 L} (x)$
with respect to $x$, one gets
\[
{g}'_{AL\_N_2 L} (x) = \frac{6x^{5 / 2}(\ln x)^2\left\{ {\ln x\left[ {(x -
1)\ln x - 6(x + 1)} \right] - 12(x - 1)} \right\}}{\left\{ {x^2(\ln x)^3 -
12x^{3 / 2}\left[ {(x + 1)\ln x - 2(x - 1)} \right]} \right\}^2},\;x \ne 1
\]

The graph of the function ${g}'_{AL\_N_2 L} (x)$ is given by

\begin{center}
\includegraphics[bb=0mm 0mm 208mm 296mm, width=45.6mm, height=45.6mm, viewport=3mm 4mm 205mm 292mm]{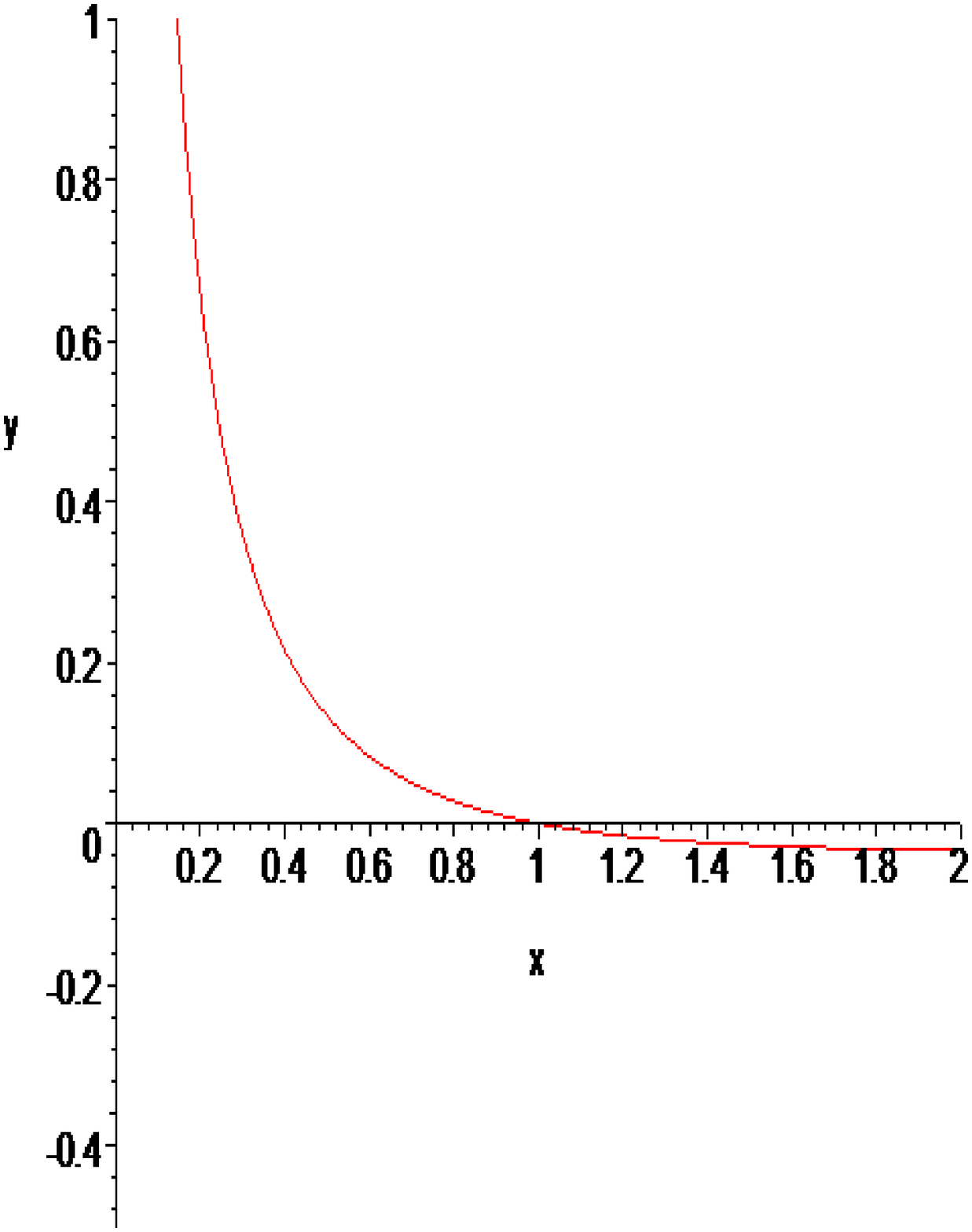}
\end{center}

We observe from the above graph the following:
\begin{equation}
\label{eq41}
{g}'_{AL\_N_2 L} (x) = \left\{ {{\begin{array}{*{20}c}
 { > 0,} & {x < 1} \\
 { < 0,} & {x > 1} \\
\end{array} }} \right.
\end{equation}

Let us calculate now $g_{AL\_N_2 L} (1)$. We observe that

\[
\left. {g_{AL\_N_2 L} (x)} \right|_{x = 1} = \left. {\frac{\left(
{{f}''_{AL} (x)} \right)^\prime }{\left( {{f}''_{N_2 L} (x)} \right)^\prime
}} \right|_{x = 1} = \left. {\frac{\left( {{f}''_{AL} (x)} \right)^{\prime
\prime }}{\left( {{f}''_{N_2 L} (x)} \right)^{\prime \prime }}} \right|_{x =
1} = \mbox{indermination}
\]

Calculating third order derivatives of numerator and denominator, we have
\begin{equation}
\label{eq42}
g_{AL\_N_2 L} (1) = \left. {\frac{\left( {{f}''_{AL} (x)} \right)^{\prime
\prime \prime }}{\left( {{f}''_{N_2 L} (x)} \right)^{\prime \prime \prime
}}} \right|_{x = 1} = \frac{ - 12}{ - 6} = 2.
\end{equation}

By the application of (\ref{eq14}) with (\ref{eq42}) we get (\ref{eq40}).
\end{proof}

\begin{proposition}  We have
\begin{equation}
\label{eq43}
M_{N_2 L} (a,b) \le 2M_{N_1 L} (a,b).
\end{equation}
\end{proposition}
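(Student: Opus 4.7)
The plan is to invoke Lemma 1.2 following the template of Propositions 3.1 and 3.2, with $f_1 = f_{N_2 L}$ and $f_2 = f_{N_1 L}$. Both generating functions have already been shown in the proof of Theorem 2.1 to satisfy the required hypotheses: $f(1) = f'(1) = 0$, twice differentiability on $\mathbb{R}_+$, and $f''_{N_1 L}(x) > 0$ for $x > 0$, $x \ne 1$. Using (\ref{eq28}), (\ref{eq29}) and the function $k(x)$ defined in (\ref{eq25}), the relevant ratio to bound is
\[
g_{N_2 L\_N_1 L}(x) := \frac{f''_{N_2 L}(x)}{f''_{N_1 L}(x)} = \frac{-\dfrac{1}{12 x^{3/2}} + k(x)}{-\dfrac{1}{8 x^{3/2}} + k(x)}, \qquad x > 0.
\]

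I would then compute $g'_{N_2 L\_N_1 L}(x)$ and, in the style of the preceding propositions, exhibit its graph to establish the sign pattern $g' > 0$ on $(0,1)$ and $g' < 0$ on $(1,\infty)$, identifying $x = 1$ as the global maximum of $g_{N_2 L\_N_1 L}$. Its value there is immediate: since $k(1) = 1/6$ by (\ref{eq26}), one gets $f''_{N_2 L}(1) = -\tfrac{1}{12} + \tfrac{1}{6} = \tfrac{1}{12}$ and $f''_{N_1 L}(1) = -\tfrac{1}{8} + \tfrac{1}{6} = \tfrac{1}{24}$, so $g_{N_2 L\_N_1 L}(1) = 2$. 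Equivalently, writing the ratio without the $k(x)$ simplification gives a $0/0$ indeterminacy that is resolved by passing to the third derivatives of numerator and denominator, exactly as in (\ref{eq39}) and (\ref{eq42}). Taking $\alpha = 0$ and $\beta = 2$ in Lemma 1.2, inequality (\ref{eq15}) then specialises to precisely (\ref{eq43}).

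The only delicate step is the monotonicity of $g_{N_2 L\_N_1 L}$. Its derivative factors as a manifestly nonnegative prefactor involving $(\ln x)^2$ times a transcendental bracket whose sign must agree with that of $-\ln x$. I would expect to justify this sign graphically, as the paper does for the analogous ratios in Propositions 3.1 and 3.2. A purely analytic alternative would exploit the symmetry $f_{(\cdot)}(1/x) = x^{-1} f_{(\cdot)}(x)$ enjoyed by both generating functions to reduce to the half-line $x \ge 1$, and then expand the bracket in $t = \ln x \ge 0$ to check signs term by term. Either way, this sign analysis is the main obstacle, while everything else reduces to routine computation and to the machinery already assembled in Section 2.
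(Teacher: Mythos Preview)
Your proposal is correct and follows essentially the same route as the paper: form the ratio $g_{N_2 L\_N_1 L}(x)=f''_{N_2 L}(x)/f''_{N_1 L}(x)$, establish graphically that it increases on $(0,1)$ and decreases on $(1,\infty)$, identify the supremum as $g(1)=2$, and invoke Lemma~1.2 with $\beta=2$. The only cosmetic difference is that the paper evaluates $g(1)$ by writing the ratio in its ``raw'' form (so that it becomes $0/0$) and applying L'H\^opital three times to obtain $(-12)/(-6)=2$, whereas you compute it directly from $k(1)=1/6$; your shortcut is cleaner, and you correctly note the L'H\^opital route as an equivalent alternative.
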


\begin{proof}  Let us consider
\[
g_{N_2 L\_N_1 L} (x) = \frac{{f}''_{N_2 L} (x)}{{f}''_{N_1 L} (x)} =
\frac{2}{3}\frac{x^2(\ln x)^3 + 12x^{3 / 2}\left[ {2(x - 1) - (x + 1)\ln x}
\right]}{x^2(\ln x)^3 + 8x^{3 / 2}\left[ {2(x - 1) - (x + 1)\ln x} \right]},
\quad
x \ne 1
\]

\noindent for all $x \in (0,\infty )$, where ${f}''_{N_2 L} (x)$ and ${f}''_{N_1 L}
(x)$ are as given by (\ref{eq28}) and (\ref{eq29}) respectively.

\bigskip
Calculating the first order derivative of the function $g_{N_2 L\_N_1 L}
(x)$ with respect to $x$, one gets

\[
{g}'_{N_2 L\_N_1 L} (x) = - \frac{4}{3}\frac{x^{5 / 2}(\ln x)^2\left\{ {12(x
- 1) + \ln x\left[ {(x - 1)\ln x - 6(x + 1)} \right]} \right\}}{\left\{
{x^2(\ln x)^3 + 8x^{3 / 2}\left[ {2(x - 1) - (x + 1)\ln x} \right]}
\right\}^2}.
\]

The graph of the function ${g}'_{N_2 L\_N_1 L} (x)$ is given by

\begin{center}
\includegraphics[bb=0mm 0mm 208mm 296mm, width=45.6mm, height=45.6mm, viewport=3mm 4mm 205mm 292mm]{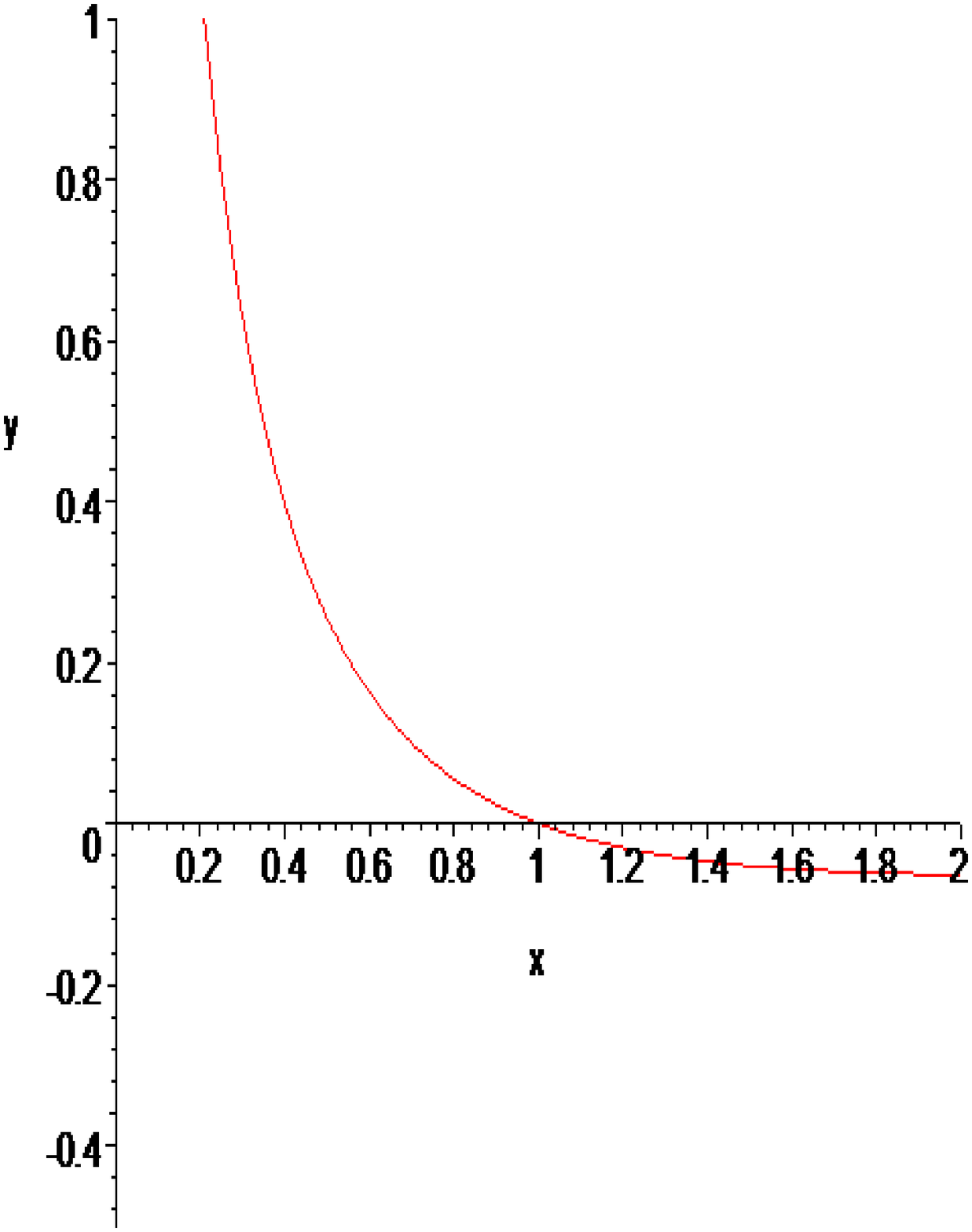}
\end{center}

We observe from the above graph the following:
\begin{equation}
\label{eq44}
{g}'_{N_2 L\_N_1 L} (x) = \left\{ {{\begin{array}{*{20}c}
 { > 0,} & {x < 1} \\
 { < 0,} & {x > 1} \\
\end{array} }} \right.
\end{equation}

Let us calculate now $g_{N_2 L\_N_1 L} (1)$. We observe that
\[
\left. {g_{N_2 L\_N_1 L} (x)} \right|_{x = 1} = \left. {\frac{\left(
{{f}''_{N_2 L} (x)} \right)^\prime }{\left( {{f}''_{N_1 L} (x)}
\right)^\prime }} \right|_{x = 1} = \left. {\frac{\left( {{f}''_{N_2 L} (x)}
\right)^{\prime \prime }}{\left( {{f}''_{N_1 L} (x)} \right)^{\prime \prime
}}} \right|_{x = 1} = \mbox{indermination}
\]

Calculating third order derivatives of numerator and denominator, we have
\begin{equation}
\label{eq45}
g_{N_2 L\_N_1 L} (1) = \left. {\frac{\left( {{f}''_{N_2 L} (x)}
\right)^{\prime \prime \prime }}{\left( {{f}''_{N_1 L} (x)} \right)^{\prime
\prime \prime }}} \right|_{x = 1} = \frac{ - 12}{ - 6} = 2.
\end{equation}

By the application of (\ref{eq14}) with (\ref{eq45}) we get (\ref{eq43}).
\end{proof}

\begin{proposition} We have
\begin{equation}
\label{eq46}
M_{SL} (a,b) \le 4\,M_{N_3 L} (a,b).
\end{equation}
\end{proposition}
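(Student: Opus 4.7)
The plan is to mirror the strategy used for Propositions~2.1--2.3 and apply Lemma~1.2 to the pair $(f_{SL},f_{N_3L})$ with constants $\alpha=0$ and $\beta=4$. Since $M_{N_3L}(a,b)\ge 0$ and $f''_{N_3L}(x)>0$ for all $x\in(0,\infty)\setminus\{1\}$ (established in Theorem~2.1), the lower bound $\alpha=0$ is automatic, so the content of the proposition is the upper bound
\[
g_{SL\_N_3L}(x) \;:=\; \frac{f''_{SL}(x)}{f''_{N_3L}(x)} \;\le\; 4, \qquad x\in(0,\infty),\;x\ne 1.
\]

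First I would write $g_{SL\_N_3L}(x)$ explicitly by substituting the formulas for $f''_{SL}(x)$ and $f''_{N_3L}(x)$ from (\ref{eq24}) and (\ref{eq28a}). Using the common factor $k(x)=\frac{2}{x^2(\ln x)^2}\bigl(\frac{x+1}{2}-\frac{x-1}{\ln x}\bigr)$ that appears in both second derivatives simplifies the bookkeeping: write
\[
f''_{SL}(x)=\frac{2}{(2x^2+2)^{3/2}}+k(x), \qquad f''_{N_3L}(x)= -\frac{x^{3/2}+1}{8x^{3/2}(2x+2)^{3/2}}+k(x),
\]
and clear denominators to obtain $g_{SL\_N_3L}(x)$ as a ratio of two cleaner expressions.

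Next I would follow the now-standard three-step recipe of the preceding propositions. (i) Compute $g'_{SL\_N_3L}(x)$ and argue, via a graphical inspection as in (\ref{eq38}), (\ref{eq41}), (\ref{eq44}), that
\[
g'_{SL\_N_3L}(x)\;\begin{cases}>0,& x<1,\\ <0,& x>1,\end{cases}
\]
so that $g_{SL\_N_3L}$ attains its supremum at $x=1$. (ii) Evaluate $g_{SL\_N_3L}(1)$. Since $f''_{SL}(1)=f''_{N_3L}(1)=0$ (both are finite, as the removable singularity at $x=1$ is cured by $k(1)=\tfrac16$ combining with the other terms), the quotient is a $0/0$ indeterminate. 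I would apply L'H\^opital's rule repeatedly, exactly as in (\ref{eq39}), (\ref{eq42}), (\ref{eq45}): the first two derivative ratios are still indeterminate, and the third-order derivative ratio gives the finite limit
\[
g_{SL\_N_3L}(1)=\left.\frac{\bigl(f''_{SL}(x)\bigr)'''}{\bigl(f''_{N_3L}(x)\bigr)'''}\right|_{x=1}=4.
\]
(iii) Combine (i) and (ii) to conclude $g_{SL\_N_3L}(x)\le 4$ on $(0,\infty)$, and invoke Lemma~1.2 to obtain (\ref{eq46}).

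The main obstacle I anticipate is the computational burden in step~(i): differentiating $g_{SL\_N_3L}$ produces an expression involving $(2x^2+2)^{3/2}$, $(2x+2)^{3/2}$, $x^{3/2}$ and $(\ln x)^3$ simultaneously, so a closed-form sign analysis by hand is painful. I would isolate the sign-determining factor (the analogue of the braces in the computations of $g'_{SH\_SL}$ and $g'_{AL\_N_2L}$), reduce it to a single-variable auxiliary function, and justify its sign graphically, consistent with how the preceding propositions are handled in this paper. The L'H\^opital computation in step~(ii) is tedious but routine: expanding each numerator and denominator in Taylor series about $x=1$ up to the required order immediately yields the ratio $4$, which I would use as an independent check on the third-derivative calculation.
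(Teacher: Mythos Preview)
Your proposal follows exactly the paper's strategy: form $g_{SL\_N_3L}(x)=f''_{SL}(x)/f''_{N_3L}(x)$, argue (graphically) that $g'$ changes sign at $x=1$ so the supremum is attained there, compute $g_{SL\_N_3L}(1)=4$ via repeated L'H\^opital, and invoke Lemma~1.2. One small correction: $f''_{SL}(1)$ and $f''_{N_3L}(1)$ are \emph{not} zero (indeed $k(1)=\tfrac16$ makes both strictly positive); the indetermination in the paper arises only because it first clears denominators, so both the rewritten numerator and denominator vanish at $x=1$---L'H\^opital is applied to that form, not to $f''_{SL}/f''_{N_3L}$ directly.
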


\begin{proof} Let us consider
\begin{align}
g_{SL\_N_3 L} (x) & = \frac{{f}''_{SL} (x)}{{f}''_{N_3 L} (x)}\notag\\
& = \frac{ - 4(2x + 2)^{3 / 2}\left[ {2x^2(\ln x)^3 + (2x^2 + 2)^{3 /
2}\left( {(x + 1)\ln x - 2(x - 1)} \right)} \right]}{(2x^2 + 2)^{3 /
2}\left[ {(x^{3 / 2} + 1)x\ln x - 4(2x + 2)^{3 / 2}\left( {(x + 1)\ln x -
2(x - 1)} \right)} \right]},\;x \ne 1.\notag
\end{align}

\noindent for all $x \in (0,\infty )$, where ${f}''_{SL} (x)$ and ${f}''_{N_3 L} (x)$
are as given by (\ref{eq24}) and (\ref{eq28a}) respectively.

\bigskip
Calculating the first order derivative of the function $g_{SL\_N_3 L} (x)$
with respect to $x$, and making necessary calculations the graph of the
function ${g}'_{SL\_N_3 L} (x)$ is given by

\begin{center}
\includegraphics[bb=0mm 0mm 208mm 296mm, width=45.6mm, height=45.6mm, viewport=3mm 4mm 205mm 292mm]{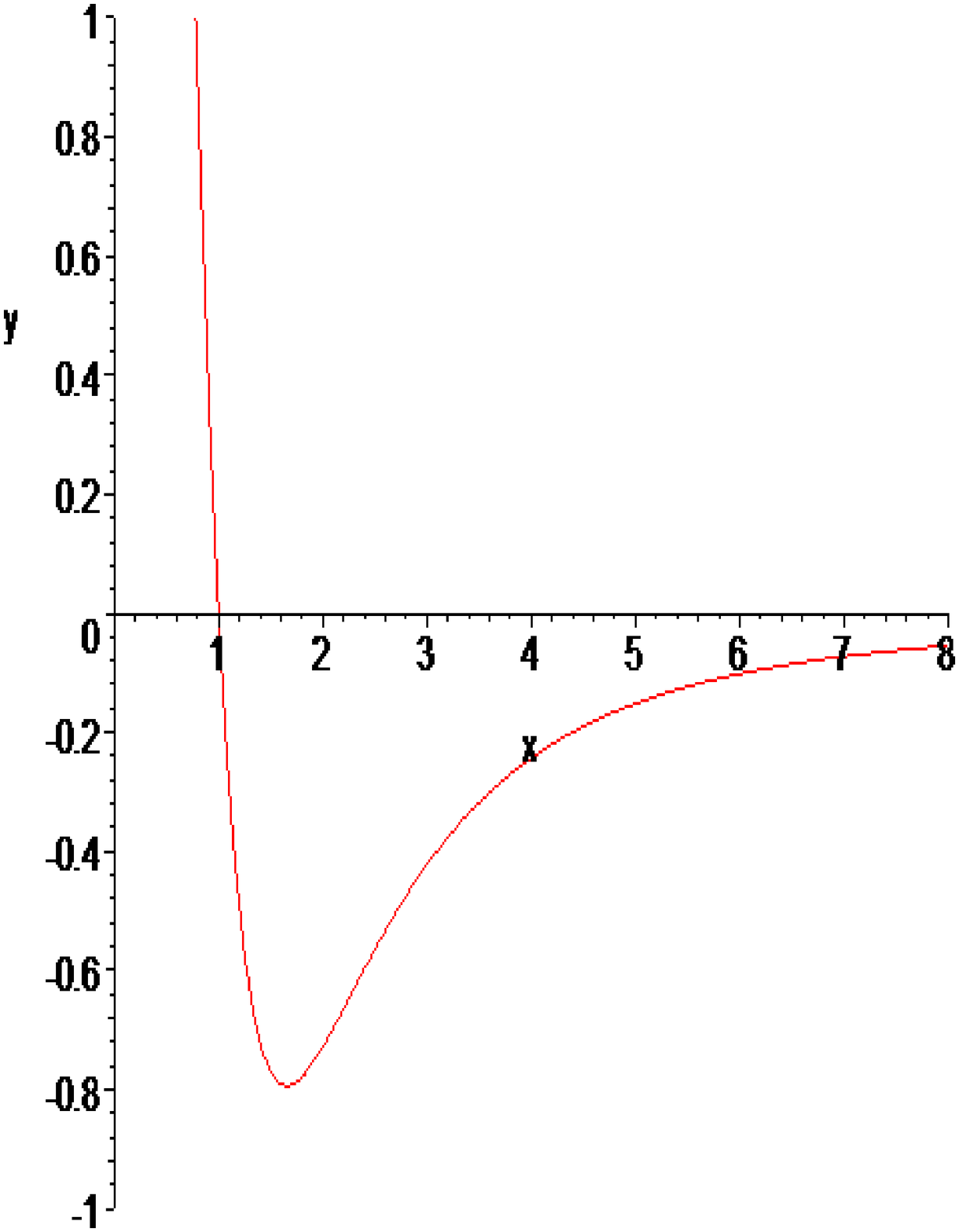}
\end{center}

We observe from the above graph the following:
\begin{equation}
\label{eq47}
{g}'_{SL\_N_3 L} (x) = \left\{ {{\begin{array}{*{20}c}
 { > 0,} & {x < 1} \\
 { < 0,} & {x > 1} \\
\end{array} }} \right.
\end{equation}

Let us calculate now $g_{SL\_N_3 L} (1)$. We observe that

\[
\left. {g_{SL\_N_3 L} (x)} \right|_{x = 1} = \left. {\frac{\left(
{{f}''_{SL} (x)} \right)^\prime }{\left( {{f}''_{N_3 L} (x)} \right)^\prime
}} \right|_{x = 1} = \left. {\frac{\left( {{f}''_{SL} (x)} \right)^{\prime
\prime }}{\left( {{f}''_{N_3 L} (x)} \right)^{\prime \prime }}} \right|_{x =
1} = \mbox{indermination}
\]

\bigskip
Calculating third order derivatives of numerator and denominator, we have
\begin{equation}
\label{eq48}
\left. {g_{SL\_N_3 L} (x)} \right|_{x = 1} = \left. {\frac{\left(
{{f}''_{SL} (x)} \right)^{\prime \prime \prime }}{\left( {{f}''_{N_3 L} (x)}
\right)^{\prime \prime \prime }}} \right|_{x = 1} = \frac{ - 160\sqrt 2 }{ -
40\sqrt 2 } = 4
\end{equation}

By the application of (\ref{eq14}) with (\ref{eq48}) we get (\ref{eq46}).
\end{proof}

\begin{proposition}  We have
\begin{equation}
\label{eq49}
M_{N_3 L} (a,b) \le \frac{5}{2}M_{N_1 L} (a,b).
\end{equation}
\end{proposition}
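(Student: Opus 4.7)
The plan is to mirror exactly the strategy used in Propositions 3.1--3.4, namely, to invoke Lemma 1.2 with $f_1 = f_{N_3 L}$ and $f_2 = f_{N_1 L}$, and to show that the bound $\beta = 5/2$ works uniformly on $(0,\infty)$. Both generating functions satisfy $f_{(\cdot)}(1) = f'_{(\cdot)}(1) = 0$ by the verifications already carried out in Theorem 2.1, and $f''_{N_1 L}(x) > 0$ for $x \neq 1$, so Lemma 1.2 applies once I pin down $\sup_x g_{N_3 L\_ N_1 L}(x)$, where
\[
g_{N_3 L\_ N_1 L}(x) = \frac{f''_{N_3 L}(x)}{f''_{N_1 L}(x)}.
\]

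First, I would substitute expressions (\ref{eq28a}) and (\ref{eq29}) to get a closed form for $g_{N_3 L\_ N_1 L}(x)$ as a ratio whose numerator and denominator are polynomial/algebraic combinations of $x$, $\sqrt{x}$, $\ln x$ and $\sqrt{2x+2}$. Next, I would differentiate to obtain $g'_{N_3 L\_ N_1 L}(x)$, simplify, and verify (using the same graphical check the author uses in the preceding propositions, or by factoring out an obviously positive prefactor and examining the sign of the remaining bracket) that
\[
g'_{N_3 L\_ N_1 L}(x) \;\;\begin{cases} > 0, & x < 1, \\ < 0, & x > 1,\end{cases}
\]
so that the maximum of $g_{N_3 L\_ N_1 L}$ on $(0,\infty)$ is attained at $x=1$.

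Then, since both $f''_{N_3 L}(1)$ and $f''_{N_1 L}(1)$ are finite and the monotonicity forces the supremum to be the limit at $1$, I would compute $g_{N_3 L\_ N_1 L}(1)$ by L'Hôpital's rule. Because at $x=1$ the numerator and denominator and their first and second derivatives all vanish (the factors $(x+1)\ln x - 2(x-1)$ in the key expansions have a triple zero at $x=1$), I would differentiate numerator and denominator three times and evaluate at $x = 1$, expecting
\[
g_{N_3 L\_ N_1 L}(1) = \left.\frac{\bigl(f''_{N_3 L}(x)\bigr)'''}{\bigl(f''_{N_1 L}(x)\bigr)'''}\right|_{x=1} = \frac{5}{2},
\]
which matches $\frac{5}{2}M_{N_1L} = \frac{10}{4}M_{N_1L}$ coming from the combined inequalities $M_{SL} \le 4 M_{N_3 L}$ (Proposition 3.4) and $M_{SL} \le \frac{5}{2} M_{AL} \le 5 M_{N_2 L} \le 10 M_{N_1 L}$ of Theorem 3.1, so it is internally consistent.

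Finally, once $\alpha = 0 \le g_{N_3 L\_N_1 L}(x) \le 5/2 = \beta$ is verified for all $x > 0$, Lemma 1.2 applied to $\phi_{f_{N_3 L}}(a,b) = M_{N_3 L}(a,b)$ and $\phi_{f_{N_1 L}}(a,b) = M_{N_1 L}(a,b)$ immediately yields (\ref{eq49}). The main obstacle I expect is the monotonicity step: confirming rigorously (rather than graphically) the sign pattern of $g'_{N_3 L\_ N_1 L}$. Because the algebraic expression contains both $\sqrt{x}$ and $(2x+2)^{3/2}$, I anticipate needing a substitution $x = t^2$ (and perhaps factoring out $(x-1)^3$ via the bracket $(x+1)\ln x - 2(x-1)$) to reduce the sign analysis to that of a function with a single zero at $x=1$.
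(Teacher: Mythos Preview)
Your proposal is correct and follows the paper's proof essentially verbatim: the paper also forms $g_{N_3 L\_N_1 L}(x)=f''_{N_3 L}(x)/f''_{N_1 L}(x)$, verifies (graphically) the sign pattern $g'>0$ for $x<1$ and $g'<0$ for $x>1$, evaluates the maximum at $x=1$ via three applications of L'H\^opital to obtain $g_{N_3 L\_N_1 L}(1)=\tfrac{-20}{-8}=\tfrac{5}{2}$, and then invokes Lemma~1.2. Your added remarks about the triple zero of $(x+1)\ln x-2(x-1)$ and the substitution $x=t^2$ go slightly beyond what the paper does (it stops at the graphical check), but the core argument is identical.
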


\begin{proof} Let us consider
\begin{align}
g_{N_3 L\_N_1 L} (x) & = \frac{{f}''_{N_3 L} (x)}{{f}''_{N_1 L} (x)}\notag\\
&  = \frac{x^2(\ln x)^3(x^{3 / 2} + 1) - 4x^{3 / 2}(2x + 2)^{3 / 2}\left( {(x
+ 1)\ln x - 2(x - 1)} \right)}{(2x + 2)^{3 / 2}\left[ { - x^2(\ln x)^3 +
8x^{3 / 2}\left( {(x + 1)\ln x - 2(x - 1)} \right)} \right]},\;x \ne 1.\notag
\end{align}

\noindent for all $x \in (0,\infty )$, where ${f}''_{N_3 L} (x)$ and ${f}''_{N_1 L}
(x)$ are as given by (\ref{eq28a}) and (\ref{eq29}) respectively.

\bigskip
Calculating the first order derivative of the function $g_{N_3 L\_N_1 L}
(x)$ with respect to $x$ and making necessary calculations the graph of the
function ${g}'_{N_3 L\_N_1 L} (x)$ is given by

\begin{center}
\includegraphics[bb=0mm 0mm 208mm 296mm, width=45.6mm, height=45.6mm, viewport=3mm 4mm 205mm 292mm]{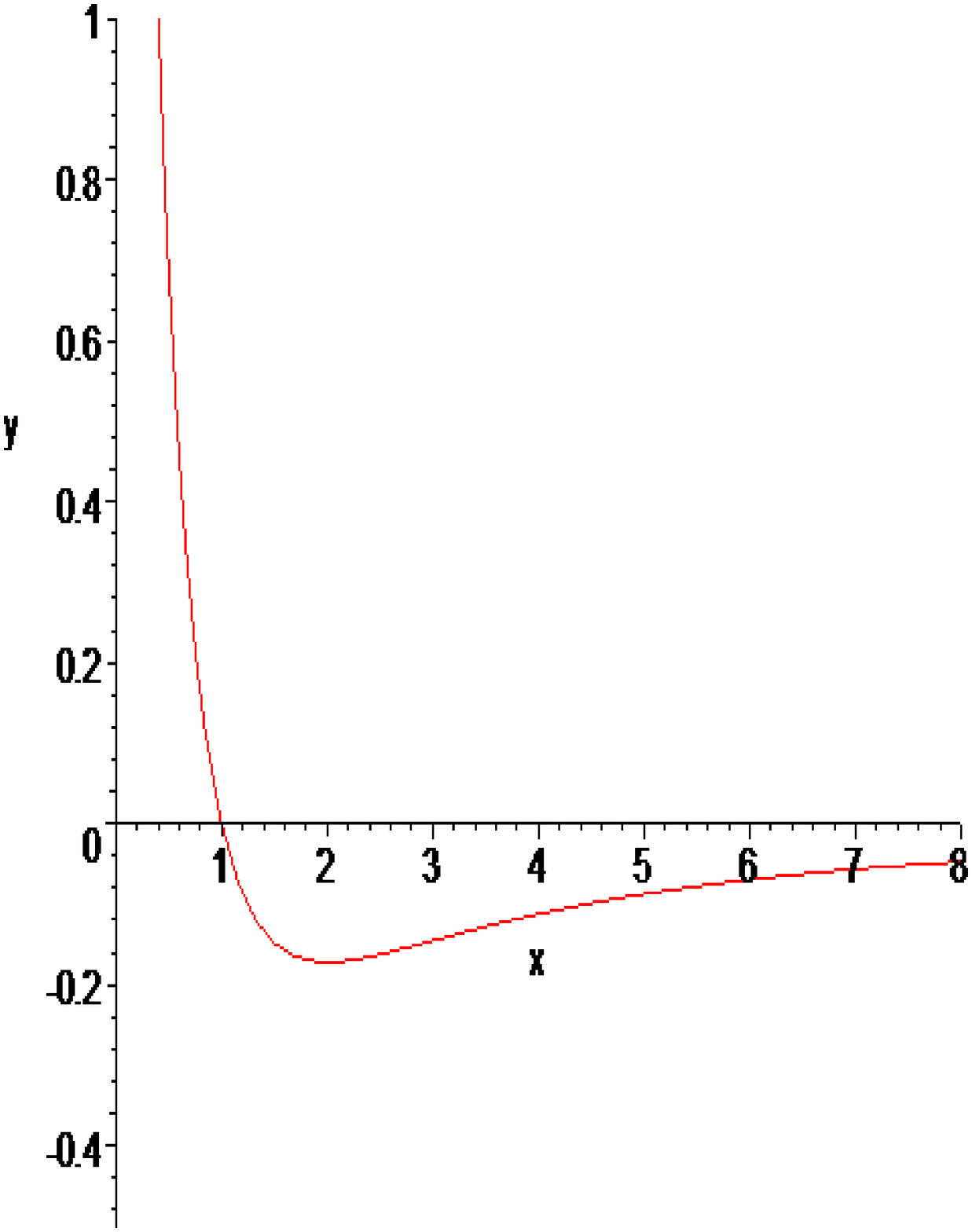}
\end{center}

We observe from the above graph the following:
\begin{equation}
\label{eq50}
{g}'_{N_3 L\_N_1 L} (x) = \left\{ {{\begin{array}{*{20}c}
 { > 0,} & {x < 1} \\
 { < 0,} & {x > 1} \\
\end{array} }} \right.
\end{equation}

Let us calculate now $g_{N_3 L\_N_1 L} (1)$. We observe that

\[
\left. {g_{N_3 L\_N_1 L} (x)} \right|_{x = 1} = \left. {\frac{\left(
{{f}''_{N_3 L} (x)} \right)^\prime }{\left( {{f}''_{N_1 L} (x)}
\right)^\prime }} \right|_{x = 1} = \left. {\frac{\left( {{f}''_{N_3 L} (x)}
\right)^{\prime \prime }}{\left( {{f}''_{N_1 L} (x)} \right)^{\prime \prime
}}} \right|_{x = 1} = \mbox{indermination}
\]

\bigskip
Calculating third order derivatives of numerator and denominator, we have
\begin{equation}
\label{eq51a}
\left. {g_{N_3 L\_N_1 L} (x)} \right|_{x = 1} = \left. {\frac{\left(
{{f}''_{N_3 L} (x)} \right)^{\prime \prime \prime }}{\left( {{f}''_{N_1 L}
(x)} \right)^{\prime \prime \prime }}} \right|_{x = 1} = \frac{ - 20}{ - 8}
= \frac{5}{2}.
\end{equation}

By the application of (\ref{eq14}) with (\ref{eq51a}) we get (\ref{eq49}).
\end{proof}

\bigskip
\textbf{Proof of the theorem 3.2.} Propositions 3.1-3.3 together proves the
inequalities (\ref{eq35}) and the propositions 3.4-3.5 proves the inequalities
(\ref{eq36}).

\section{Remarks}

In this section we shall give some remarks based on the Theorem 3.1 given in
section 3.

\begin{remark} Following the similar lines of Theorem 3.1, still we
can prove that
\begin{equation}
\label{eq51}
M_{SH} (a,b)  \le \frac{9}{5}M_{SL} (a,b)
\end{equation}
\begin{equation}
\label{eq52}
M_{AG} (a,b) \le \frac{3}{2}M_{AL} (a,b)
\end{equation}
\noindent and
\begin{equation}
\label{eq53}
M_{SN_1 } (a,b)  \le \frac{9}{10}M_{SL} (a,b),
\end{equation}
\noindent where
\[M_{SH} (a,b) = S(a,b) - H(a,b) \]
\[M_{AG} (a,b) = A(a,b) - G(a,b) \]
\noindent and
\[M_{SN_1 } (a,b) = S(a,b) - N_1 (a,b).\]
\end{remark}

\bigskip
The convexity of the measures $M_{SH} (a,b)$, $M_{AG} (a,b)$ and $M_{SN_1 }
(a,b)$ can be seen in Taneja \cite{tan3}

\begin{remark} The following inequalities hold:
\begin{equation}
\label{eq54}
M_{SH} (a,b) \le \frac{9}{5}M_{SL} (a,b) \le 3M_{AG} (a,b) \le
\frac{9}{2}M_{AL} (a,b)
\end{equation}

\noindent and
\begin{equation}
\label{eq55}
M_{SA} (a,b) \le \frac{3}{4}M_{SN_3 } (a,b) \le \frac{2}{3}M_{SN_1 } (a,b)
\le \frac{3}{5}M_{SL} (a,b)
\end{equation}
\end{remark}

\begin{proof}  We know that the following inequalities hold:
\begin{equation}
\label{eq56}
M_{SA} (a,b) \le \frac{1}{3}M_{SH} (a,b) \le \frac{1}{2}M_{AH} (a,b) \le
\frac{1}{2}M_{SG} (a,b) \le M_{AG} (a,b).
\end{equation}

\noindent and
\begin{equation}
\label{eq57}
M_{SA} (a,b) \le \frac{3}{4}M_{SN_3 } (a,b) \le \frac{2}{3}M_{SN_1 } (a,b).
\end{equation}

The proof of the inequalities (\ref{eq56}) and (\ref{eq57}) can be seen in Taneja \cite{tan3}.

\bigskip
Inequalities (\ref{eq56}) together with (\ref{eq51}) and (\ref{eq52}) give (\ref{eq54}).
Inequalities (\ref{eq57}) together with (\ref{eq53}) give (\ref{eq55}). Still we need to show
that
\begin{equation}
\label{eq58}
\frac{9}{5}M_{SL} (a,b) \le 3M_{AG} (a,b),
\end{equation}

\noindent
i.e.,
\[
M_{SL} (a,b) \le \frac{5}{3}M_{AG} (a,b)
\]

\bigskip
In order to show this let us consider the difference
\begin{align}
T_1 (a,b)&  = \frac{5}{3}M_{AG} (a,b) - M_{SL} (a,b)\notag\\
 & = \frac{5A(a,b) - 5G(a,b)}{3} - S(a,b) + L(a,b) = a\;f_{T_1 } \left({\frac{b}{a}} \right),\notag
 \end{align}
\noindent where
\[f_{T_1 } (x)  = \frac{5(x + 1)}{6} - \frac{5\sqrt x }{3} - \frac{\sqrt
{2x^2 + 2} }{2} + \frac{x - 1}{\ln x}, \quad x \ne 1, \quad x > 0 \]

\bigskip
The graph of the function $f_{T_1 } (x)$ is given by
\begin{center}
\includegraphics[bb=0mm 0mm 208mm 296mm, width=45.6mm, height=45.6mm, viewport=3mm 4mm 205mm 292mm]{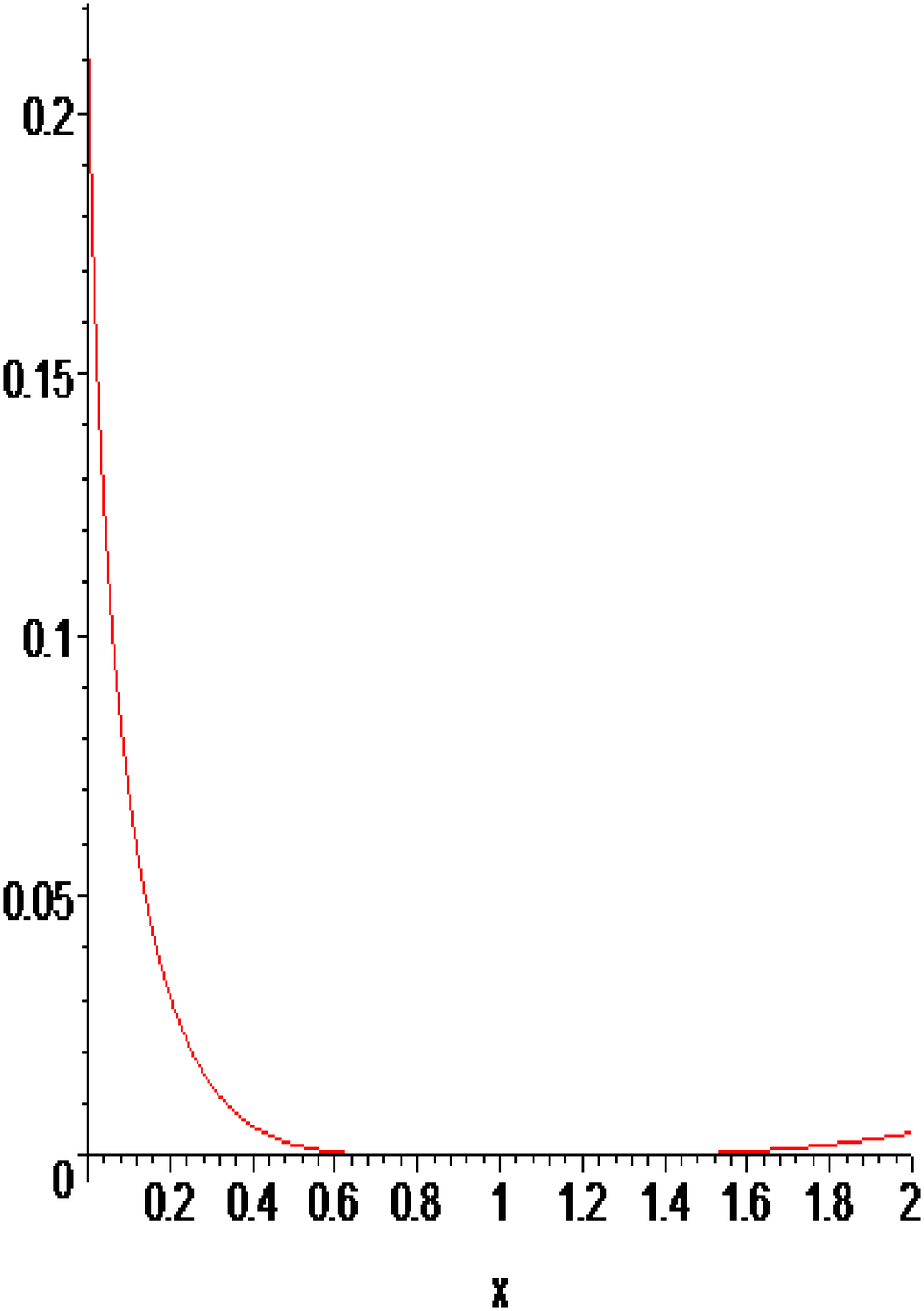}
\end{center}

From the above graph it is clear that $T_1 (a,b) \ge 0$. Proving the
inequalities (\ref{eq58}), consequently, proving (\ref{eq54}).
\end{proof}

\begin{remark} The following inequalities hold
\begin{align}
\label{eq59}
L(a,b) \le &  \frac{S(a,b) + 9L(a,b)}{10} \le \frac{2N_3 (a,b) + 3L(a,b)}{5}\notag\\
 & \le \frac{5A(a,b) + 7L(a,b)}{12} \le N_1 (a,b) \le \frac{5N_2 (a,b) + L(a,b)}{6}.
\end{align}
\end{remark}

\begin{proof} Simplifying the inequalities (\ref{eq34}) and (\ref{eq35}) given in
Theorem 3.1, we get
\begin{equation}
\label{eq60}
\frac{S(a,b) + 5L(a,b)}{6} \le \frac{5A(a,b) + 7L(a,b)}{12} \le N_1 (a,b)
\le \frac{5N_2 (a,b) + L(a,b)}{6}
\end{equation}

\noindent and
\begin{equation}
\label{eq61}
\frac{S(a,b) + 9L(a,b)}{10} \le \frac{2N_3 (a,b) + 3L(a,b)}{5} \le N_1
(a,b)
\end{equation}

\noindent respectively.

\bigskip
Combining the above two inequalities (\ref{eq60}) and (\ref{eq61}), we can get (\ref{eq59})
provided the following hold
\begin{equation}
\label{eq62}
\frac{2N_3 (a,b) + 3L(a,b)}{5} \le \frac{5A(a,b) + 7L(a,b)}{12}
\end{equation}

\bigskip
In order to prove inequalities (\ref{eq62}), let us consider the difference
\begin{align}
 T_2 (a,b) & = \frac{5A(a,b) + 7L(a,b)}{12} - \frac{2N_3 (a,b) + 3L(a,b)}{5} \notag\\
& = \frac{25A(a,b) - 24N_3 (a,b) - L(a,b)}{60} = a\;f_{T_2 } \left( {\frac{b}{a}} \right), \notag
\end{align}
\noindent where
\[ f_{T_2 } (x)  = \frac{5}{12}\left( {\frac{x + 1}{2}} \right) -
\frac{1}{60}\left( {\frac{x - 1}{\ln x}} \right) - \frac{2}{5}\left(
{\frac{\left( {1 + \sqrt x } \right)\left( {\sqrt {2x + 2} } \right)}{4}}
\right), \quad x \ne 1, \quad x > 0 \]

\bigskip
The graph of the function $f_{T_2 } (x),\;\;x > 0,\;x \ne 1$ is given by

\begin{center}
\includegraphics[bb=0mm 0mm 208mm 296mm, width=45.6mm, height=45.6mm, viewport=3mm 4mm 205mm 292mm]{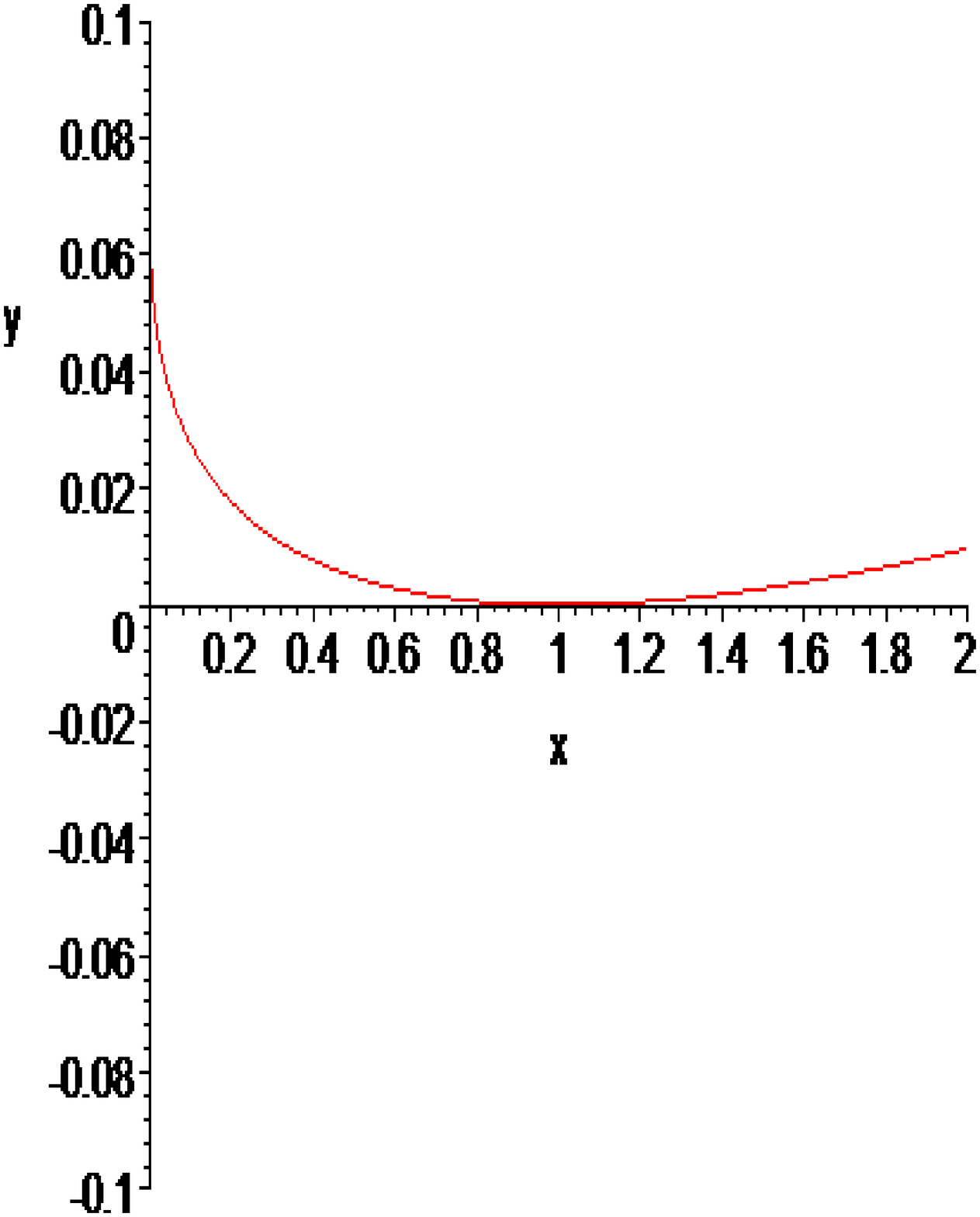}
\end{center}

From the above graph we observe that the function $f_{T_2 } (x),\;\;x >
0,\;x \ne 1$, is nonnegative with $f_{T_2 } (1) = 0$. Thus we conclude that
the validity of the expression (\ref{eq54}) proving the inequalities (\ref{eq51}).
\end{proof}

\begin{remark} There is no relation between the expressions
\[ \frac{S(a,b) + 5L(a,b)}{6} \]
\noindent and
\[\frac{2N_3 (a,b) + 3L(a,b)}{5}.\]
\end{remark}

\begin{proof} This we shall show by simple example. Let us write the difference
\[ T_3 (a,b) = \frac{S(a,b) + 5L(a,b)}{6} - \frac{2N_3 (a,b) + 3L(a,b)}{5}
 = a\;f_{T_3 } \left( {\frac{b}{a}} \right)\]

\noindent where
\[f_{T_3 } (x)  = \frac{\sqrt {2 + 2x^2} }{12} + \frac{7}{30}\left( {\frac{x
- 1}{\ln x}} \right) - \frac{\left( {1 + \sqrt x } \right)\sqrt {2x + 2}
}{10}, \quad x \ne 1, \quad x > 0\]

\bigskip
By simple calculations, we have

\begin{center}
$f_{T_3 } (0.00001) = - 0.00337512758$ and $f_{T_3 } (\ref{eq1}) = 0.0001321351$.
\end{center}

Thus we have two values, one negative and another positive proving that we
can't have relation between these two measures.
\end{proof}

\begin{remark} The followings hold
\begin{equation}
\label{eq63}
N_2 (a,b) \le \frac{5N_2 (a,b) + L(a,b)}{6} \le N_3 (a,b).
\end{equation}
\end{remark}

\begin{proof} In view of (\ref{eq60}), it is sufficient to show that
\[
N_3 (a,b) - \frac{5N_2 (a,b) + L(a,b)}{6} \ge 0
\]

Let us write
\[
T_4 (a,b) = N_3 (a,b) - \frac{5N_2 (a,b) + L(a,b)}{6} = a\;f_{T_3 } \left(
{\frac{b}{a}} \right),\]
\noindent where
\[f_{T_4 } (x)  = \frac{\left( {1 + \sqrt x } \right)\left( {\sqrt {2x + 2} }
\right)}{4} - \frac{5\left( {1 + \sqrt x + x} \right)}{18} -
\frac{1}{6}\frac{x - 1}{\ln x}, \quad x \ne 1, \quad x > 0 \]

The graph of the function $f_{T_4 } (x)$ is given by

\begin{center}
\includegraphics[bb=0mm 0mm 208mm 296mm, width=45.6mm, height=45.6mm, viewport=3mm 4mm 205mm 292mm]{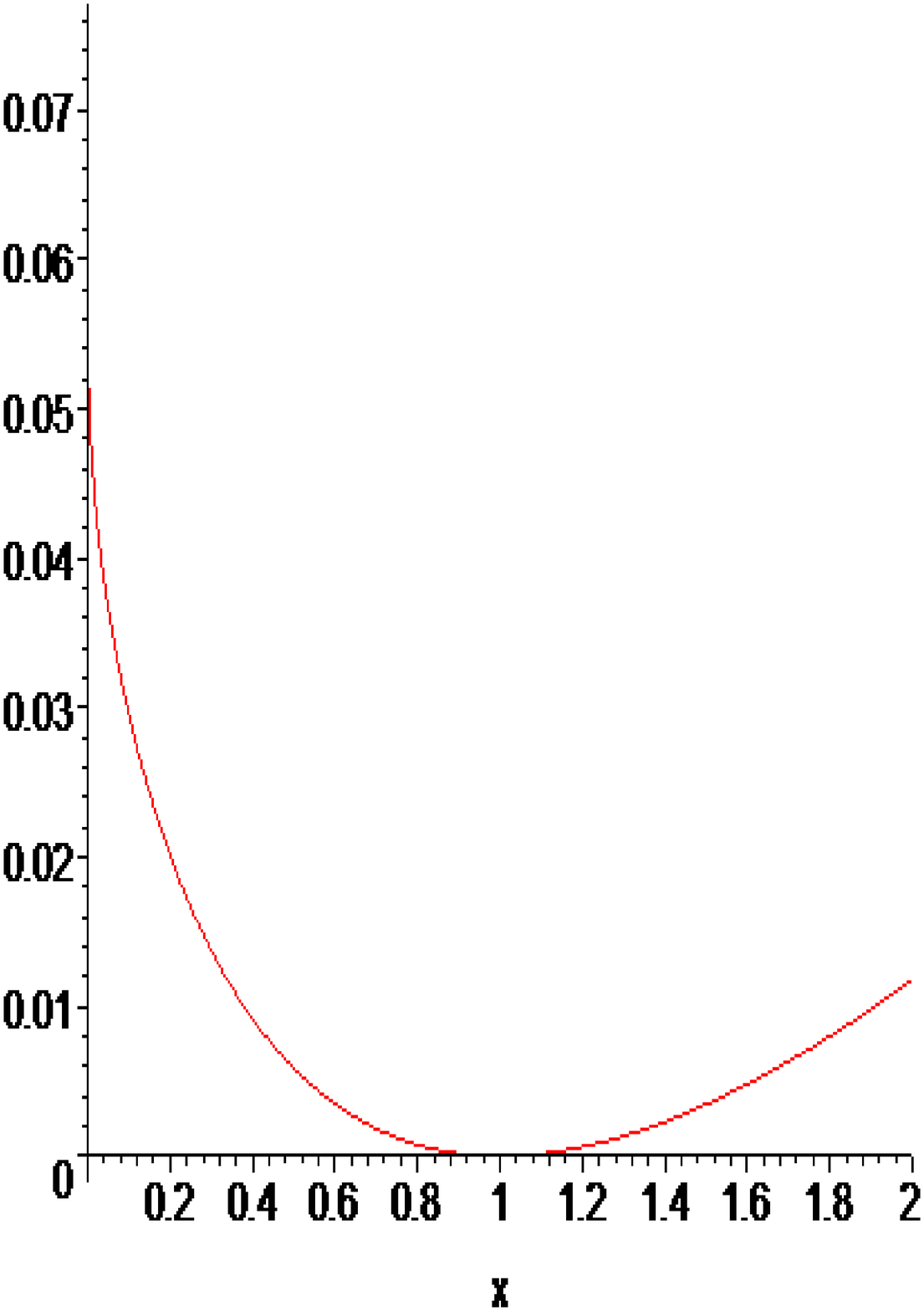}
\end{center}

From the graph above it is clear that $T_4 (a,b) \ge 0$. Proving the
inequalities (\ref{eq63}).
\end{proof}

\end{document}